%% file: main.tex
\documentclass{article}
\usepackage{ijcai26}

\usepackage{times}
\usepackage{soul}
\usepackage{url}
\usepackage[hidelinks]{hyperref}
\usepackage[utf8]{inputenc}
\usepackage[small]{caption}
\usepackage{graphicx}
\usepackage{amsmath}
\usepackage{amsthm}
\usepackage{booktabs}
\usepackage{algorithm}
\usepackage{algorithmic}
\usepackage[switch]{lineno}
\usepackage{amsfonts,bm,mathtools,amssymb}
\usepackage{stmaryrd}
\usepackage[T1]{fontenc}
\usepackage{todonotes}
\usepackage{xspace}

\input{macros}

\newtheorem{theorem}{Theorem}

\title{The Complexity of Verifying Feedforward Neural Networks in Quantised Settings}

\author{
Eric Alsmann$^1$
\and
Martin Lange$^1$\and
Marco Sälzer$^{2}$ \\
\affiliations
$^1$University of Kassel\\
$^2$RPTU University Kaiserslautern-Landau\\
\emails
\{eric.alsmann, martin.lange\}@uni-kassel.de,
marco.saelzer@rptu.de
}

\newtheorem{lemma}[theorem]{Lemma}

\newtheorem{proposition}[theorem]{Proposition}

\theoremstyle{definition}

\begin{document}

\maketitle

\begin{abstract}
    We investigate the computational complexity of neural network verification in quantised settings. 
    We distinguish three classes of Feedforward Neural Networks (FNNs): rational FNNs with exact rational 
    weights, quantised FNNs whose weights come from a finite-width arithmetic, 
    and dynamically quantised FNNs in which rational networks are evaluated with respect to a 
    given finite-width arithmetic. We consider two types of specifications used in the literature. 
    Linear programming (LP) specifications are conjunctions of linear constraints, while bit-vector (BV) 
    specifications allow reasoning at the bit level and can express non-linear constraints. 
    Our results give a complexity landscape of these verification problems. 
    For quantised FNNs with fixed arithmetic precision, we show that verification under both LP and BV specifications remains NP-complete, 
    matching the complexity of the rational case. For dynamically quantised FNNs with BV specifications, we establish upper bounds, 
    complementing a previously known PSPACE-hardness result.
\end{abstract}

\section{Introduction}
\label{sec:intro}
\input{sections/introduction.tex}

\section{Preliminaries}
\label{sec:prelims}
\input{sections/prelims.tex}

\section{Verifying Quantised FNN: Overview}
\label{sec:problems}
\input{sections/overview_problems.tex}

\section{Verifying Quantised FNN in Relation to Linear Constraints}
\label{sec:lpproblems}
\input{sections/lp_problems.tex}

\section{Verifying Quantised FNN in Relation to BV Constraints}
\label{sec:bvproblems}
\input{sections/bv_problems.tex}

\section{Outlook}
\label{sec:outlook}
\input{sections/outlook.tex}





\bibliographystyle{named}
\bibliography{refs}

\pagebreak
\appendix
\section{Omitted proofs from Section~\ref{sec:bvproblems}}
\label{sec:appendix}
\input{sections/appendix.tex}

\end{document}

%% file: macros.tex
\newcommand*{\rats}{\mathbb{Q}}
\newcommand*{\ints}{\mathbb{Z}}
\newcommand*{\nats}{\mathbb{N}}

\newcommand{\bs}[1]{\boldsymbol{#1}}
\newcommand*{\size}[2]{\ensuremath{|#2|_{#1}}}

\newcommand*{\fixarith}{\mathbb{F}^{\mathrm{ix}}}
\newcommand*{\floatarith}{\mathbb{F}^{\mathrm{loat}}}
\newcommand*{\arith}{\mathbb{F}}

\newcommand*{\relu}{\mathrm{relu}}

\newcommand*{\NP}{\textsc{NP}\xspace}

\newcommand*{\PSPACE}{\textsc{PSpace}\xspace}
\newcommand*{\NEXPTIME}{\textsc{NExpTime}\xspace}

\newcommand*\BitAnd{\mathbin{\&}}
\newcommand*\BitOr{\mathbin{|}}

\newcommand*\BitNeg{\ensuremath{\mathord{\sim}}}

%% file: sections/introduction.tex

Feedforward Neural Networks (FNN) are ubiquitous in deep learning, not only as a stand-alone model for classifying
inputs or transforming input values into outputs, but also as components in other machine learning models, for instance
transformers, cf.\ \cite{VaswaniSPUJGKP17}. The ever-growing quest for machine-learning methods to solve various 
computational tasks, in particular for safety-critical applications, comes with a need for methods that formally 
verify properties like safety or robustness of such models, including FNN.

In this paper, we are concerned with verification tasks of the following form: given an FNN $N$ and 
a safety property $\varphi$, decide whether $\varphi$ holds for $N$ or not.
To distinguish this from other forms of verification, such as testing $N$ on
a sample set of inputs or adjusting the training procedure to meet certain safety criteria, 
we refer to this as \emph{static (FNN) verification}. 

In its most rigorous form, static verification tasks are stated as formal decision
problems. In one of the most prominent variants, addressed by a wide-range of practical solvers, cf.\ 
\cite{VNNComp2024,BrixMBJL23}, or analysed formally, cf.\ \cite{DBLP:journals/fuin/SalzerL22,Wurm23,Henzinger_Lechner_Žikelić_2021}, 
static verification is concerned with the question of whether the pairs of inputs and corresponding 
outputs of a given FNN satisfy certain linear constraints.
Problems of this kind are often referred to as \emph{reachability} problems, cf.\ \cite{HuangKRSSTWY20}, and 
other formal problems like robustness can usually be reduced to reachability.

The reachability problem for FNN was claimed (see  
\cite{DBLP:conf/cav/KatzBDJK17,DBLP:journals/fmsd/KatzBDJK22,DBLP:conf/ijcai/RuanHK18}) and shown (see
\cite{10.1007/978-3-030-89716-1_10,DBLP:journals/fuin/SalzerL22}) to be \NP-complete. In this setting, FNN
are considered to be mappings on real-valued vector spaces. In practice, the arithmetic operations internal
to the FNN are of course executed not in arithmetic over $\mathbb{R}$ which would be infeasible on any 
bounded-memory computer, but in some fixed-width arithmetic using 64bit floating-point or fixed-point
numbers for instance. This shift from real-valued inputs and outputs to such discrete values is also known 
as \emph{quantisation}, and it comes not only with a loss of precision but also with artefacts like rounding
or over-/underflows. One may therefore argue that (\textsc{i}) \NP-completeness of the reachability problem for 
FNN in arithmetic over $\mathbb{R}$ does not necessarily predict the computational complexity of
formal verification for FNN over a quantised arithmetic, and (\textsc{ii}) that the quantised setting
is more interesting for practical purposes anyway.  

We investigate the computational complexity of formal verification, i.e.\ reachability,
for quantised FNN. Quantisation does not make formal verification easier, i.e.\ \NP-hardness remains a lower bound.
This is not really surprising since computational hardness is often tightly linked to discretisation, see
for instance the \NP-completeness of Mixed-Integer Linear Programming versus the polynomial-time
solvability of $\mathbb{R}$-valued linear programs.
\cite{Henzinger_Lechner_Žikelić_2021} recently studied the verification problem for quantised 
FNN and provided a result that suggests higher complexity, namely \PSPACE instead of \NP. There seems to be
some widespread misunderstanding about the actual results, perhaps fuelled by the fact that the paper is
seen to be hard to read. First of all, \cite{Henzinger_Lechner_Žikelić_2021} only provides a lower bound of 
\PSPACE-hardness, not \PSPACE-completeness. Second, they consider a richer specification language for safety 
properties: while \NP-completeness holds for safety properties that are convex sets of vectors, i.e.\ 
specifications in the form of linear programs, \cite{Henzinger_Lechner_Žikelić_2021} uses bit-vector logics 
which allow non-convex sets to be specified. Thus, the increase in complexity from \NP to \PSPACE is not due to
quantisation, as that paper may suggest, but due to the use of a richer specification language for safety
properties.

In this paper, we aim to provide an overall picture of the computational complexity of formal verification
for quantised FNN. We classify the complexities depending on (\textsc{i}) the involved specification formalisms 
for safety properties, as they are commonly used in the literature, (\textsc{ii}) the exact nature of the
fixed bit-width arithmetic that determines how the FNN's internal computations are performed on common
hardware architectures and, finally, (\textsc{iii}) the dependence on the parameter determining the bit-width
of such numbers, in particular whether this is given in unary or in binary.
The motivation for studying the effect that this parameter has, also comes from practice: suppose we
import a trained FNN to run on our hardware. The FNN may come with advice saying that arithmetic operations
should be carried out with a certain bit-width. In this case, that width would normally be given as a decimal
number, i.e.\ the actual width is exponential in the representation size of that parameter. On the other hand,
when verifying a quantised FNN on the hardware where it already resides, the bit-width rather corresponds to
the size of actual CPU registers, and therefore a unary encoding of this length may be a more realistic measure
of the space needed to represent such quantised values.

The paper is organised as follows. In Section~\ref{sec:prelims}, we provide thorough definitions of all
fundamentals necessary for the findings presented in this paper. In Section~\ref{sec:problems}, we formally 
present reachability problems of FNN as decision problems. In Section~\ref{sec:lpproblems}, we examine the 
complexity of reachability in different quantisation settings where specifications are given by linear constraints. 
In Section~\ref{sec:bvproblems}, we study the complexity of verifying quantised FNN in the context of bit-vector 
logic, complementing the work of \cite{Henzinger_Lechner_Žikelić_2021}. Finally, in Section~\ref{sec:outlook}, 
we briefly discuss our findings and outline prospective next steps.

%% file: sections/prelims.tex

\paragraph{Numbers, vectors and matrices.} We exclusively consider numbers from  $\rats$. 
Typically, we denote vectors using small, bold symbols like $\bs{x}$, $\bs{y}$ or $\bs{z}$ and matrices using big, 
bold symbols like $\bs{A}$, $\bs{B}$ or $\bs{C}$. We denote by $\bs{x}_{i}$ the $i$-th entry of vector 
$\bs{x}$ and by $\bs{A}_{i,j}$ the entry in the $i$-th row and $j$-th column of matrix $\bs{A}$. Let 
$x \in \rats$ with $p \in \ints$ and $q \in \nats^{> 0}$ such that $x = \frac{p}{q}$. 
If not stated otherwise, we define the \emph{size of $x$}, denoted by $\size{}{x}$, as 
$\size{}{x} = \size{}{p} + \size{}{q}$
where $\size{}{p} = (\lfloor \log_2(p) + 1 \rfloor) + 1$ and $\size{}{q} = \lfloor \log_2(q) + 1 \rfloor$. Analogously, 
we extend the definition of $\size{}{\cdot}$ to vectors $\bs{x} \in \rats^m$ and matrices $\bs{A} \in \rats^{m\times n}$ by 
$\size{}{\bs{x}} = \sum_{i=1}^m \size{}{\bs{x}_i}$ and $\size{}{\bs{A}} = \sum_{i=1}^m \sum_{j=1}^n\size{}{\bs{A}_{i,j}}$.

\paragraph{Finite-width arithmetics.} We consider two kinds of finite arithmetics: fixed- and floating-point arithmetics. 
A \emph{fixed-point arithmetic $\fixarith_{b,f}$} is a tuple $(\rats^\mathrm{fix}_{b,f}, o, r)$, where 
$\rats_{b,f} = \{\frac{n}{2^f} \mid -2^{b-1} \leq n < 2^{b-1}, n \in \ints\}$ is the sets of rationals representable using 
$b$ bits with $f$ bits for representing the fractional parts, $o \colon \rats \to \rats_{b,f}$ is a map called \emph{overflow mode}
and $r \colon \rats \rightarrow \rats_{b,f}$ is a map called \emph{rounding mode}. We associate with $\fixarith_{b,f}$
the typical arithmetic operations, carried out as $\circ_{\fixarith_{b,f}} := o(r(x \circ y))$ for all 
$\circ \in \{+,\cdot,\div\}, x,y \in \rats$, and binary comparisons, carried out as $\sim_{\fixarith_{b,f}} := o(x) \sim o(y)$
for all $\sim \, \in \{<, \leq, =\}, x,y \in \rats$.
Similarly, a \emph{floating-point arithmetic 
$\floatarith_{p,e}$} is a tuple $(\rats^\mathrm{float}_{p,e}, r)$ where
\begin{align*}
    \rats^\mathrm{float}_{p,e} = \{&(-1)^s\cdot (1+\frac{k}{2^p})\cdot 2^E \mid s \in \{0,1\}, E,k \in \nats, \\
    &-(2^{e-1}-2) \leq E \leq 2^{e-1}-1, 0 \leq k \leq 2^p-1 \}
\end{align*}
is the set of rationals representable with $p$-bit significant and $e$-bit exponent width, and $r \colon \rats \to \rats^\mathrm{float}_{p,e}$ 
is a rounding mode. We associate with $\floatarith_{p,e}$
the typical arithmetic operations, carried out as $\circ_{\floatarith_{p,e}} := r(x \circ y)$ for all 
$\circ \in \{+,\cdot,\div\}, x,y \in \rats$, and  binary comparisons, carried out as $\sim_{\floatarith_{p,e}} := r(x) \sim r(y)$
for all ${\sim} \, \in \{<, \leq, =\}$ and $x,y \in \rats$.
To keep things concise, we refer to both of these as 
\emph{finite-width arithmetics $\arith_{n_1,n_2}$} and represent them as tuples $\arith = (n_1,n_2,r,o)$, where $n_1,n_2$ are the 
total bit number and fractional width (fixed-point), or the significant and exponent width (floating-point); $r$ is the rounding mode 
and $o$ the overflow mode (in the case of floating-point, we simply assume $o=r$ w.l.o.g.). We use $\arith_{n_1,n_2}$ 
to refer to the full arithmetic as well as the set of numbers representable in $\arith_{n_1,n_2}$.

\paragraph{Feedforward neural networks.} An FNN $N$, of some class of FNN $\mathcal{N}$, is a tuple $(\bs{A}^{(i)}, \bs{b}^{(i)})_{i=1}^k$ of matrices 
$\bs{A}^{(i)} \in \rats^{m_i \times n_i}$ and vectors $\bs{b}^{(i)} \in \rats^{m_i}$. We call $k$ the \emph{number of layers (or depth) of $N$}, 
$n_1$ the \emph{input dimension of $N$}, $n_k$ the \emph{output dimension of $N$}, $m_i$ the \emph{size of layer $i$} 
for all $i \leq k$, and generally assume that $m_i = n_{i+1}$ for all $i \leq k-1$. This concise representation 
of FNN from $\mathcal{N}$ assumes unambiguity in the definition of $\mathcal{N}$ regarding which and how activation functions 
are used by $N \in \mathcal{N}$. We generally assume that this is $\relu(x) = \max(0,x)$, the 
prominent \emph{ReLU activation} function, across all layers.\footnote{As done in other works such as 
\cite{DBLP:journals/fuin/SalzerL22}, our results can be extended to FNN using general
\emph{piecewise linear} activation functions.}

Regarding different assumptions about the underlying arithmetic, we consider two distinct types of FNN for verification:

\textbf{Rational FNN:} A \emph{rational FNN} $N = (\bs{A}^{(i)}, \bs{b}^{(i)})_{i=1}^k$ has weights and biases from $\rats$. It computes a function 
$N \colon \rats^{n_1} \to \rats^{n_k}$ by the standard forward propagation with $N(\bs{x}) =$
\begin{displaymath}
     \sigma^{(k)}(\bs{A}^{(k)} \cdot (\sigma^{(k-1)}( \dotsb \sigma^{(1)}(\bs{A}^{(1)} \cdot \bs{x} + \bs{b}^{(1)}) \dotsb )) + \bs{b}^{(k)})\ .
\end{displaymath}
For rational FNNs, we define $\size{}{N} = \sum_{i=1}^{k} \size{}{\bs{A}^{(i)}} + \size{}{\bs{b}^{(i)}}$.

\textbf{Quantised FNN:} Given some finite-width arithmetic $\arith = (n_1,n_2,r,o)$, a \emph{quantised FNN} has weights and biases from $\arith_{n_1,n_2}$. Such an FNN $N|_\arith$ computes a function $N|_\arith \colon \arith^{n_1} \to \arith^{n_k}$ by $N|_\arith(\bs{x}) =$ 
\begin{displaymath}
    \sigma^{(k)}_\arith(\bs{A}^{(k)} \cdot_\arith (\sigma^{(k-1)}_\arith( \dotsb \sigma^{(1)}_\arith(\bs{A}^{(1)} \cdot_\arith \bs{x} +_\arith \bs{b}^{(1)}) \dotsb )) +_\arith \bs{b}^{(k)})
\end{displaymath}
where $\sigma^{(i)}(\cdot)$ denotes the entrywise application of activation function $\sigma^{(i)} \colon \rats \to \rats$. We assume that 
the definition of $\arith$ determines how $\sigma^{(i)}$ is carried out in $\arith$. We associate with $N|_\arith$ the tuple 
$(\bs{A}^{(i)}_\arith, \bs{b}^{(i)}_\arith)_{i=1}^k$ where all weights and biases are represented in $\arith_{n_1,n_2}$. 
For quantised FNN, we define $\size{}{N|_\arith} = \sum_{i=1}^{k} \size{\arith}{\bs{A}^{(i)}} + \size{\arith}{\bs{b}^{(i)}}$, where 
$\size{\arith}{\cdot}$ denotes the representation size as specified by $\arith$.

\paragraph{Linear Programs.} A \emph{linear program (LP)} is a system of linear constraints of the form $L =$
    $\bigwedge_{i=1}^{m} \big( \sum_{j=1}^{n} a_{i,j} \cdot x_j \sim_i b_i \big)$
where $a_{i,j}, b_i \in \rats$, $x_j$ are variables, and ${\sim_i} \in \{<, \leq, =, \geq, >\}$ for each $i$. 
The \emph{feasibility problem} for such linear programs is: given an LP, does there exist an assignment of values 
to the variables $x_1, \ldots, x_n$, that satisfies all constraints? We can encode neural network verification problems 
as feasibility problems over linear constraints, where the constraints capture both the network's behaviour and the 
property specification. We define $\size{}{L} = \sum_{i=1}^{m} \size{}{b_i} + \sum_{j=1}^{n} \size{}{a_{i,j}}$.

A fundamental result in linear programming theory establishes a polynomial bound on the size of solutions:

\begin{theorem}[\cite{korteCombinatorialOptimizationTheory2018}]
\label{thm:lp-solution-size}
Let $L$ be a linear program with input size $\size{}{L}$.
If $L$ has a solution, then there exists a solution $(x_1, \ldots, x_n)$ such that $\sum_{j=1}^{n} \size{}{x_j} \leq \text{poly}(\size{}{L})$.
\end{theorem}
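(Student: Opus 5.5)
The plan is the classical vertex argument, combined with Cramer's rule and Hadamard's inequality. We may assume all constraints are non-strict. A system that also has strict inequalities can be reduced to this case with a slack variable $t$: replace each strict constraint $\sum_j a_{i,j}x_j < b_i$ by $\sum_j a_{i,j}x_j + t \le b_i$, and maximise $t$ subject to $0\le t\le 1$. The original system is feasible iff this optimum is positive. An optimal vertex of the modified polyhedron gives a solution of the original system whose size is bounded in the same way, since the modification changes $\size{}{L}$ only linearly. Equalities can also be written as two inequalities.

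\textbf{Reduction to a vertex.} Write $L$ as $\bs{A}\bs{x}\le\bs{b}$ with polyhedron $P=\{\bs{x}\mid \bs{A}\bs{x}\le\bs{b}\}\neq\emptyset$. A nonempty polyhedron contains a minimal face, and every minimal face is an affine subspace of the form $\{\bs{x}\mid \bs{A}'\bs{x}=\bs{b}'\}$, where $\bs{A}'\bs{x}\le\bs{b}'$ is a subsystem of the constraints. Choose a maximal set of linearly independent rows of $\bs{A}'$; call the resulting square-or-wide system $\bs{A}''\bs{x}=\bs{b}''$, of rank $r$. Pick $r$ linearly independent columns and set all other variables to $0$. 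This gives a nonsingular $r\times r$ system $\bs{B}\bs{y}=\bs{c}$, and its unique solution, padded with zeros, lies in the minimal face and hence in $P$.

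\textbf{Bounding the size.} By Cramer's rule, $y_j=\det(\bs{B}_j)/\det(\bs{B})$, where $\bs{B}_j$ is $\bs{B}$ with column $j$ replaced by $\bs{c}$. So it suffices to bound the encoding size of determinants of square submatrices of $[\bs{A}\mid\bs{b}]$ by a polynomial in $\size{}{L}$. First clear denominators in each row: multiply row $i$ by the product of the denominators of its entries. This makes every entry an integer whose bit length is at most the size of that row, so the total bit size grows at most quadratically. Scaling rows does not change the solution set. For an integer matrix $\bs{M}$, Hadamard's inequality gives $|\det \bs{M}|\le\prod_i\|\bs{M}_{i,\cdot}\|_2$, so $\log_2|\det\bs{M}|\le\sum_i\sum_j(\log_2(|\bs{M}_{i,j}|+1)+1)$. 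This is linear in the bit size of $\bs{M}$. Hence the numerator and denominator of each $y_j$ have size $O(\text{poly}(\size{}{L}))$. Summing over at most $n\le\size{}{L}$ coordinates keeps the total polynomial.

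\textbf{Main obstacle.} The delicate step is the structural fact that a nonempty polyhedron has a point that is the unique solution of a nonsingular subsystem of tight constraints, since $P$ need not be pointed. I would handle this as above: take a minimal face, which is an affine subspace, and fix the free directions by zeroing non-basic variables. To prove that minimal faces are affine subspaces, I would start from a point of $P$ and repeatedly move along a direction in the kernel of the current tight rows until a new constraint becomes tight. This increases the rank of the tight system each time, and it stops either when the tight rows have full column rank or when every remaining direction leaves all constraints slack, so that the face is the affine set. The strict-inequality reduction via $t$ is the secondary technical point.
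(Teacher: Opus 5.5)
Your proposal is correct and is essentially the standard textbook argument behind the result, which the paper cites from Korte--Vygen without giving a proof: a minimal face $\{\bs{x}\mid\bs{A}'\bs{x}=\bs{b}'\}$, a basic solution obtained by zeroing non-basic variables, Cramer's rule, and a linear bound on the size of determinants; the only differences are cosmetic, e.g.\ you bound determinants via row-scaling and Hadamard's inequality rather than directly for rational matrices. Your slack-variable reduction usefully covers the strict constraints that the paper's LP definition allows but the cited statement for $\bs{A}\bs{x}\le\bs{b}$ does not, though ``optimal vertex'' there should read ``a point of a minimal face of the modified polyhedron lying in the optimal face'' (this face is nonempty because the objective is bounded by $t\le 1$), since that polyhedron need not be pointed.
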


This polynomial bound is crucial for our analysis as it ensures that solutions to linear programs used in neural network verification have representations whose size is polynomially bounded by the input representation size.

\paragraph{Bit-Vector Logics (BV).}
We consider fixed-size bit vector logics as introduced in \cite{DBLP:journals/mst/KovasznaiFB16}. Let $\ell\in\mathbb{N}^+$ be a fixed bit-width and $\mathcal{V}$ be a finite set of variables. A BV formula $\varphi$ is inductively defined as
\begin{align*}
    \varphi &:= (t_1 = t_2) \mid \neg \varphi \mid \varphi \land \varphi  &
    t &:= x \mid c \mid \BitNeg t \mid t \BitAnd t \mid (t \BitOr t) 
\end{align*}
where $x\in\mathcal{V}$ and $0 \leq c \leq 2^\ell -1$.

A BV formula $\varphi$ is evaluated over bit-vectors of fixed size $\ell$. An assignment is a function $\vartheta: \mathcal{V} \rightarrow \{0,1\}^{\ell}$ that maps each variable to a bit-vector of length $\ell$. The semantics of formulas is defined inductively as
\begin{align*}
    \vartheta \models (t_1=t_2) &\Leftrightarrow \llbracket t_1 \rrbracket^{\vartheta} =  \llbracket t_2 \rrbracket^{\vartheta}\\
    \vartheta \models \varphi_1 \land \varphi_2 &\Leftrightarrow \vartheta \models \varphi_1 \text{ and } \vartheta \models \varphi_2 \\
    \vartheta \models \neg \varphi &\Leftrightarrow \vartheta \not\models \varphi
\end{align*}
The interpretation of terms under an assignment $\vartheta$ is defined as
\begin{align*}
    \llbracket x \rrbracket^{\vartheta} &= \vartheta(x) &
    \llbracket c \rrbracket^{\vartheta} &= \text{enc}_\ell(c) \\
    \llbracket \BitNeg t \rrbracket^{\vartheta} &= \BitNeg \llbracket t \rrbracket^{\vartheta} &
    \llbracket t_1 \odot t_2 \rrbracket^{\vartheta} &= \llbracket t_1 \rrbracket^{\vartheta} \odot \llbracket t_2 \rrbracket^{\vartheta}
\end{align*}
for $\odot\in\{\BitAnd, \BitOr\}$, where `$\BitAnd$', `$\BitOr$' and `$\BitNeg$' denote the standard bitwise AND, OR, and NOT operations
and $\text{enc}_\ell(c)$ is the $\ell$-bit binary representation of $c$.

A BV formula $\varphi$ is \emph{satisfiable over bit-vectors of length $\ell$} if there exists an assignment $\vartheta$ such that $\vartheta \models \varphi$.

\begin{proposition}[\cite{DBLP:journals/mst/KovasznaiFB16}]
    The satisfiability problem for BV formulas with bit length and constants given in binary is \textsc{NP}-complete.
\end{proposition}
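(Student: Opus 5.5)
The proposition has two halves: membership in \NP and \NP-hardness. Note that the bit length $\ell$ is given in binary, so an assignment $\vartheta$ may have exponential size. A naive guess-and-check argument therefore does not give an \NP upper bound, and the membership argument must avoid writing down full $\ell$-bit vectors.

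\textbf{Membership.} The operations $\BitNeg$, $\BitAnd$ and $\BitOr$ all act bitwise, so every term evaluates position by position. Each position $p \in \{0,\dots,\ell-1\}$ therefore behaves like a Boolean formula over one bit per variable. The constants appear in binary with at most $|\varphi|$ bits each. This gives at most polynomially many \emph{breakpoints}: the positions where some constant has a $1$, together with the positions beyond the highest such bit, where all constants read $0$. Consequently, position $p$ belongs to one of polynomially many \emph{profiles}, where a profile is the column vector of constant bits at $p$.

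An atom $t_1 = t_2$ holds if and only if it holds at every position. A negated atom holds if and only if it fails at some position. So I would first push $\neg$ through $\land$ and view $\varphi$ as a Boolean combination of atoms, then guess the truth value of each atom. For each atom guessed false, the certificate also contains a witness position $p$ in binary, together with a bit assignment to the variables at that position. The remaining requirement is that every atom guessed true holds at every position, including every witness position.

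Since positions are independent, the witness positions can be handled separately. If two false atoms share a witness position, a single bit assignment serves both. So I would guess one bit-assignment per witness position, and one uniform bit-assignment per profile class for the positions that are not witnesses. Each of these is checked in polynomial time: evaluate the atoms at that column and confirm consistency with the guessed truth values. This part is routine once the profiles are set up.

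\textbf{Hardness.} I would reduce from SAT using $\ell = 1$. With single bits, $\BitAnd$, $\BitOr$ and $\BitNeg$ are ordinary Boolean connectives, so a CNF formula $\psi$ maps to the formula $(t_\psi = 1)$, where $t_\psi$ is the term obtained from $\psi$.

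\textbf{Main obstacle.} The delicate step is the profile counting in the membership argument. I need to argue that any satisfying assignment can be normalised so that all non-witness positions with equal profile carry identical bits. This follows because copying a valid column to another position with the same profile preserves all true atoms and does not create new violations. I also need to bound the number of witness positions by the number of atoms, so the certificate stays polynomial even though $\ell$ is exponential.
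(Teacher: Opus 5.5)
The paper gives no proof of this proposition: it is imported from the cited work of Kov\'asznai et al., so there is no in-paper argument to compare against. Your proof is correct and self-contained. You correctly flag that, with $\ell$ in binary, guessing a full assignment is not an \NP certificate. The ingredients you then use are the right ones:
\begin{itemize}
\item $\BitNeg$, $\BitAnd$ and $\BitOr$ act position-wise;
\item because the constants are written in binary, there are only polynomially many profile classes, and all positions beyond the longest constant share the all-zero profile;
\item a true atom must hold in every column, while a false atom needs only one failing column;
\item your copying argument shows that the certificate is both sound and complete.
\end{itemize}
That certificate consists of the atom truth values, one binary witness position with its column per false atom, and one column per occurring profile. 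Hardness via $\ell=1$ is immediate.

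When you write it up, make three verifier checks explicit. First, the guessed truth values satisfy the Boolean skeleton of $\varphi$. Second, the witness positions are pairwise distinct and below $\ell$. Third, a column is supplied for every profile that actually occurs. The verifier can list these profiles itself; if a profile occurs only at witness positions, a witness column can be reused.

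An equivalent and slightly more compact packaging is a width reduction. Let $L$ be the length of the longest constant and $k\geq 1$ the number of atoms. Then $\varphi$ is satisfiable at width $\ell$ if and only if it is satisfiable at width $\min(\ell, L+k)$, by exactly your copying argument. This needs $k\geq 1$, so that at least one all-zero-profile position survives whenever $\ell>L$. After the reduction, ordinary guess-and-check with polynomial-time model checking gives membership, and no profile bookkeeping is needed in the certificate.
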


The \emph{model checking} problem for BV is: given a BV formula $\varphi$ with bit-width $b$ and an assignment $\vartheta$, 
determine whether $\vartheta \models \varphi$.

\begin{proposition}
    \label{prop:bv_nlogspace}
    The model-checking problem for BV formulas can be decided in polynomial time.
\end{proposition}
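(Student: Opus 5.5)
The plan is a recursive evaluation of the formula under $\vartheta$, working bottom-up over its syntax tree. The input consists of $\varphi$, the bit-width $b$, and the assignment $\vartheta$. The assignment must list an explicit bit-vector of length $b$ for every variable of $\varphi$, so the input size is at least $b\cdot|\mathcal{V}|$. If $\varphi$ contains no variables, the constants are given in binary, and each constant has at most $b$ bits, padded with leading zeros. Either way, every intermediate bit-vector computed during evaluation has length $b$. I will argue that $b$ is polynomially bounded by the input size under the paper's encoding: in the degenerate case where $b$ is given in binary and no variable occurs, $\mathrm{enc}_b(c)$ is simply $c$ padded with zeros.

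\textbf{Step 1 (terms).} Each term $t$ is evaluated by structural induction to the bit-vector $\llbracket t\rrbracket^{\vartheta}\in\{0,1\}^b$. The cases follow the semantics directly.
\begin{itemize}
\item A variable $x$ is looked up in $\vartheta$.
\item A constant $c$ is encoded as $\mathrm{enc}_b(c)$.
\item For $\BitNeg t$, $t_1\BitAnd t_2$ and $t_1\BitOr t_2$, the operation is computed bitwise from the values of the subterms.
\end{itemize}
Each operation takes $O(b)$ time, and there are at most $|\varphi|$ subterm nodes, so this step costs $O(|\varphi|\cdot b)$ in total.

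\textbf{Step 2 (formulas).} Each subformula is evaluated by structural induction on $\varphi$, following the clauses of $\models$. An atom $t_1=t_2$ is decided by comparing two $b$-bit vectors, which takes $O(b)$ time. Negation and conjunction are then propagated as Boolean operations. The total running time is $O(|\varphi|\cdot b)$.

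\textbf{Main obstacle.} The only real issue is ensuring that $b$ cannot be exponential relative to the input. I would handle it as follows. Whenever a variable occurs, $\vartheta$ explicitly lists a vector of length $b$, so $b$ is at most the input size. If no variable occurs, I would avoid materialising full vectors of length $b$. Instead, each term is evaluated symbolically as a pair: an explicit low-order block whose length is bounded by the longest constant, and a uniform high-order block stored as a single bit together with its length. The operations $\BitNeg$, $\BitAnd$ and $\BitOr$ preserve this form, and equality can still be tested on it, so the total cost stays polynomial in $|\varphi|$ and $\log b$.

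If logarithmic space is also wanted, as the label of the proposition suggests, the same argument goes through by evaluating bit positions one at a time. For each position $i$, the truth values of atoms are combined by iterating over $i$ and using a logspace evaluation of Boolean formulas. For the statement as claimed, polynomial time suffices.
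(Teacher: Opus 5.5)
Your proposal is correct and takes the same approach as the paper: evaluate $\varphi$ bottom-up over its syntax tree, with each term node costing $O(b)$, for $O(b\cdot|\varphi|)$ time overall. The paper leaves implicit that $b$ is polynomial in the input size, which holds in its application because the assignment lists explicit $(n_1+n_2)$-bit vectors; your observation that explicit assignment vectors bound $b$, and your representation of each value as an explicit low-order block plus a uniform high-order block in the variable-free case with $b$ in binary, are a correct refinement of that point rather than a different route.
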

\begin{proof}
    This is trivial, as given an assignment $\vartheta$, one can evaluate $\varphi$ bottom-up by traversing the syntax tree. Each term can be evaluated in time $\mathcal{O}(b)$, thus the whole procedure can be done in time $\mathcal{O}(b\cdot |\varphi|)$.
\end{proof}

%% file: sections/overview_problems.tex
\newtheorem{problem}{Definition}

We consider five distinct reachability problems in this paper, differentiated
by the type of arithmetic presumed for numerical operations and two
separate formalisms utilised to specify valid inputs and outputs in the literature.
We start with the most standard, yet somewhat idealised setting.

\paragraph*{$\textsc{Reach}_\mathbb{Q}(\text{LP})$:}
given rational FNN $N$, and two rational linear programs 
$L_1,L_2$, decide whether there is input $\bs{x}$ of $N$ that satisfies $L_1$ such that the output $N(\bs{x})$ satisfies 
$L_2$.

$\textsc{Reach}_\mathbb{Q}(\text{LP})$ assumes the computations of $N$, 
including the representation of its numerical parameters, to be performed in full-precision, rational arithmetic. 
However, this assumption
is unrealistic for concrete computer architectures, which are limited to specific finite-width arithmetics, such as IEEE 754
floating-point arithmetic using 32 bits to represent numbers in binary. Neglecting these real-world restrictions, as done by various
solvers addressing $\textsc{Reach}_\mathbb{Q}(\text{LP})$ or similar problems relying on idealised assumptions, has
severe consequences as shown in \cite{DBLP:conf/sas/JiaR21}. Thus, it is crucial to account for the quantisation of FNN
due to some finite-width arithmetic during verification. Let $\mathcal{F} = \{\arith_{n_1,n_2} \mid n_1,n_2 \in \nats\}$ denote the 
set of all finite-width arithmetics as defined in Section~\ref{sec:prelims}.
We assume w.l.o.g.\ that $\mathcal{F}$ only contains well-defined arithmetics, excluding, for instance, 
fixed-point arithmetics $\fixarith_{b,f}$ with $f > b$.

\paragraph{$\textsc{Reach}_\mathcal{F}(\text{LP})$:} 
given a quantised FNN $N|_{\arith}$ with input dimensionality $d$, and two quantised linear programs $L_1|_{\arith},L_2|_{\arith}$ 
for some $\arith \in \mathcal{F}$, decide whether there is $\bs{x} \in \arith^d$ that satisfies $L_1|_{\arith}$
such that $N|_{\arith}(\bs{x})$ satisfies $L_2|_{\arith}$.

In the context of quantised FNN, \cite{Henzinger_Lechner_Žikelić_2021} considered specifications 
other than plain linear constraints, namely those expressed by bit-vector logic (BV); see 
Section~\ref{sec:prelims} for details.

\paragraph{$\textsc{Reach}_\mathcal{F}(\text{BV})$:}
given a quantised FNN $N|_{\arith}$ with input dimensionality $d$ for some $\arith=\arith_{n_1,n_2} \in \mathcal{F}$, and 
two BV formulas $\varphi_1,\varphi_2$ both with bit-length $\ell = n_1+n_2$,
decide whether there is $\bs{x} \in \arith^d$ that satisfies $\varphi_1$
such that $N|_{\arith}(\bar{x})$ satisfies $\varphi_2$.

While $\textsc{Reach}_\mathcal{F}(\text{LP})$ and $\textsc{Reach}_\mathcal{F}(\text{BV})$ realistically assume restrictions imposed by
finite-width arithmetic, they permit further generalisation: quantisation often occurs post-training, cf.\ 
\cite{quantizationSurveyGholami2021}, implying that the FNN is initially trained and delivered using more
exacting representations. In actual implementations, it is quantised to different small-width
arithmetics, meaning that the size of the FNN and the size of the finite-width arithmetic are
independent.

\paragraph{$\textsc{Reach}(\mathcal{F}, \text{LP})$:}
given a finite-width arithmetic
$\arith \in \mathcal{F}$, a rational FNN $N$ with input dimensionality $d$, and two rational linear programs $L_1,L_2$, decide whether there exists
$\bs{x} \in \arith^d$ that satisfies $L_1|_{\arith}$ such that $N|_{\arith}(\bs{x})$ satisfies $L_2|_{\arith}$.

\paragraph{$\textsc{Reach}(\mathcal{F}, \text{BV})$:}
given a finite-width arithmetic
$\arith \in \mathcal{F}$, a rational FNN $N$ with input dimensionality $d$, and two BV formulas $\varphi_1,\varphi_2$ with bit-length $\ell = n_1+n_2$, 
decide whether there exists $\bs{x} \in \arith^d$ that satisfies $\varphi_1$ such that $N|_{\arith}(\bs{x})$ 
satisfies $\varphi_2$.

We remark that post-training quantisation techniques, as assumed in $\textsc{Reach}(\mathcal{F}, \text{LP})$ and 
$\textsc{Reach}(\mathcal{F}, \text{BV})$, typically involve more than merely rounding numerical parameters using standard
rounding methods specified by some finite-width arithmetic $\arith$, like rescaling weights prior to quantisation.
Nonetheless, we note that such variations do not affect the aforementioned problems
because they examine the safety of the resulting quantised FNN, independent of its relation to
the unquantised version.

%% file: sections/lp_problems.tex

$\textsc{Reach}_\mathbb{Q}(\text{LP})$ is the problem considered most often in the context of reachability
w.r.t.\ linear constraints. It assumes FNN to work over full-precision, rational arithmetic 
and input- and output-specifications given by sets of linear constraints, i.e.\ linear programs.
\begin{theorem}[\cite{10.1007/978-3-030-89716-1_10}]
    \label{sec:lpproblems;th:standard}
    $\textsc{Reach}_\mathbb{Q}(\text{LP})$ is \NP-complete.
\end{theorem}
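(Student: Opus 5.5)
We prove membership in \NP and \NP-hardness separately.

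\textbf{Upper bound.} The plan is to guess the activation pattern of the network and reduce the check to an LP. Let $N=(\bs{A}^{(i)},\bs{b}^{(i)})_{i=1}^k$, $L_1$, $L_2$ be given. For every neuron $j$ in layer $i$, the certificate guesses a bit $s_{i,j}\in\{0,1\}$ saying whether its ReLU is active. Fixing these bits makes $N$ a linear map on the region they define. The region and the computation can then be written as one LP $L$ as follows.
\begin{itemize}
\item Introduce variables $\bs{x}$ for the input, $\bs{z}^{(i)}$ for pre-activations and $\bs{y}^{(i)}$ for post-activations.
\item Add the equalities $\bs{z}^{(1)}=\bs{A}^{(1)}\bs{x}+\bs{b}^{(1)}$ and $\bs{z}^{(i)}=\bs{A}^{(i)}\bs{y}^{(i-1)}+\bs{b}^{(i)}$.
\item If $s_{i,j}=1$, add $\bs{z}^{(i)}_j\ge 0$ and $\bs{y}^{(i)}_j=\bs{z}^{(i)}_j$.
\item If $s_{i,j}=0$, add $\bs{z}^{(i)}_j\le 0$ and $\bs{y}^{(i)}_j=0$.
\item Add $L_1$ over $\bs{x}$ and $L_2$ over $\bs{y}^{(k)}$.
\end{itemize}
Then $(N,L_1,L_2)$ is a yes-instance iff some guess makes $L$ feasible. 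Any witness $\bs{x}$ induces a consistent pattern; if a pre-activation equals $0$, either choice works. Since $\size{}{L}$ is polynomial in $\size{}{N}+\size{}{L_1}+\size{}{L_2}$, feasibility can be checked in polynomial time with a polynomial-time LP algorithm (ellipsoid or interior point).

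Strict inequalities from $L_1,L_2$ need some care. I would handle them with the standard trick of adding a variable $\varepsilon$, rewriting each $<$ as $\le -\varepsilon$, and maximising $\varepsilon$. Alternatively, the certificate can include an explicit solution of polynomial size, which exists by Theorem~\ref{thm:lp-solution-size}; it is then verified by direct substitution. Either way, membership in \NP follows.

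\textbf{Lower bound.} I would reduce from 3-SAT. For each variable $v_i$, introduce an input $x_i$ and use $L_1$ to restrict it to $0\le x_i\le 1$. Integrality is forced by the network. One option is a gadget computing $\relu(x_i)+\relu(-x_i)$-like expressions. A cleaner option is to compute $t_i=\min(x_i,1-x_i)=x_i-\relu(2x_i-1)$ in ReLU form and require $t_i\le 0$ in $L_2$, which forces $x_i\in\{0,1\}$. For a clause $C=\ell_1\vee\ell_2\vee\ell_3$, write each literal as $x_i$ or $1-x_i$, an affine map of the inputs. The network outputs the sum of the three literal values, and $L_2$ requires this sum to be at least $1$. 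The $t_i$ values are passed through as further outputs; they are nonnegative, so applying ReLU to them is harmless. The network has two layers and polynomial size.

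The main obstacle is the bookkeeping of the hardness gadget. Routing the identity through ReLU layers needs either $\relu(x)-\relu(-x)$ or the observation that the inputs are nonnegative. Checking $\min(x,1-x)$ also needs care. Both are routine, so I expect the argument to go through, with the upper bound's handling of strict inequalities as the only other subtle point.
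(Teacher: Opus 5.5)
Your membership argument is the paper's. You guess an activation pattern, replace each ReLU by the identity or by $0$ with the matching sign constraint on its pre-activation, merge $N$, $L_1$, $L_2$ into one LP, and solve it in polynomial time. Your treatment of strict inequalities covers a point the paper passes over; if you use $\varepsilon$, cap it (e.g.\ $\varepsilon\le 1$) and accept iff the optimum is positive, or else use a polynomial-size solution from Theorem~\ref{thm:lp-solution-size}.

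Your hardness reduction is also correct, but it is organised differently. The paper's network evaluates $\varphi$ itself. It uses literal gadgets $\relu(x)$ and $\relu(1-x)$, a clause gadget $\relu(\relu(s)-\relu(s-1))$ that clips the literal sum $s$ to $\{0,1\}$, and a conjunction gadget $\relu(\sum_i x_{C_i}-(n-1))$. As a result, $L_2$ only fixes output coordinates to constants: $1$ for the formula output and $0$ for the integrality gadgets.

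You instead use the network only to force $x_i\in\{0,1\}$, via $t_i=x_i-\relu(2x_i-1)=\min(x_i,1-x_i)$ on $[0,1]$. You then let $L_2$ carry the Boolean structure through one linear constraint ``literal sum $\geq 1$'' per clause. In effect this is a reduction from $0$-$1$ integer feasibility.

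Your version gives a depth-two network with an immediate correctness check. It also makes explicit that the only non-convexity needed is the integrality constraint $x_i\in\{0,1\}$, which ReLU can express. The paper's version puts all the logic into the network and keeps the specifications minimal: a box on the inputs and equalities with constants on the outputs. That form is convenient for the paper's later reuse of the same construction in the quantised and BV settings.

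Incidentally, your integrality gadget is correct, whereas the one printed in Fig.~\ref{sec:lpproblems;fig:3cnf-gadgets} evaluates to $\frac12$ at $x=0$ and to $0$ at $x=\frac12$. A version with the stated behaviour is $\relu(\frac12-\relu(\frac12-x)-\relu(x-\frac12))$, which agrees with your $t$ on $[0,1]$.
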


We briefly recapitulate the proof of Theorem~\ref{sec:lpproblems;th:standard} to reuse certain arguments for subsequent results. 
\NP-membership is shown using the certificate-based understanding of \NP: guess a witness $w$ of polynomial size in
$\size{}{N} + \size{}{L_1} + \size{}{L_2}$ which is the size of an instance $(N, L_1, L_2)$ of $\textsc{Reach}_\mathbb{Q}(\text{LP})$,
and then verify in polynomial time relative to $\size{}{w} + \size{}{N} + \size{}{L_1} + \size{}{L_2}$ whether $(N, L_1, L_2)$
is a valid instance. Here, the key insight is that for each input $\bs{x}$ of $N$,
there is a fixed \emph{activation pattern} of $N$, meaning a tuple $w \in \{0,1\}^n$, where $n$ is the number of nodes in $N$, 
such that $w[i] = 1$ if and only if the input of the $i$th node in $N$ is greater than or equal to $0$.
Clearly, such an activation pattern $w$ is of size polynomial in the size of $N$.
Having this, the argument for \NP-membership 
is as follows: given FNN $N$, LP $L_1$ and $L_2$, 
(\textsc{i}) guess an activation pattern $w$ of $N$.
(\textsc{ii}) Fix all ReLU nodes of $N$ by replacing the activation of the $i$th node by
    the identity if $w[i]=1$, and by the constant $0$-function if $w[i]=0$.
(\textsc{iii}) Construct a single LP $L$ by combining $L_1$, $L_2$, and $N$, which is possible since $N$
    computes a linear function now, and add constraints ensuring that any solution respects $w$, and
(\textsc{iv}) solve $L$.
Since $L$ is essentially just the combination of $N$, $L_1$, and $L_2$, and LPs can be solved in polynomial time,
the procedure above runs in nondeterministic polynomial time. 

\begin{figure}[t]
\centering
\setlength{\abovedisplayskip}{4pt}
\setlength{\belowdisplayskip}{4pt}
\[
\begin{array}{rl}
\varphi = \bigwedge_{i=1}^n C_i
&: \relu(\sum_{i=1}^n x_{C_i} - (n-1))
\\[2pt]
C_i = \bigvee\limits_{j=1}^3 \ell_{i,j}
&: \relu(\relu(\sum\limits_{j=1}^{3} x_{\ell_{i,j}}) - \relu(\sum\limits_{j=1}^{3} x_{\ell_{i,j}} {-} 1)) 
\\[2pt]
\ell_{i,j} = X_{i,j} &: \relu(x_{i,j}) 
\\[2pt]
\ell_{i,j} =\neg X_{i,j} &:  \relu(1-x_{i,j})
\\[2pt]
x \in \{0,1\} &: \relu(\relu(\frac{1}{2}-x)+\relu(\frac{1}{2}-x)-\frac{1}{2})
\end{array}
\]
\caption{FNN gadgets, one for each component of a propositional 3CNF formula $\varphi$, that output $1$ if the corresponding 
subformula is satisfied and $0$ otherwise. Additionally, it shows a gadget that for $x \in [0;1]$ outputs $0$ if $x=0$ or 
$x=1$, and some value greater than $0$ otherwise.}
\label{sec:lpproblems;fig:3cnf-gadgets}
\end{figure}

\NP-hardness of $\textsc{Reach}_\mathbb{Q}(\text{LP})$ is shown via a reduction from \textsc{3Sat}, the satisfiability
problem of propositional Boolean formulae in three-conjunctive normal form (3CNF). Given a 3CNF
formula $\varphi = \bigwedge_{i=1}^n C_i$ with $n \in \nats$ where each $C_i = (\ell_{i,1} \lor \ell_{i,2} \lor \ell_{i,3})$ is a clause with literals $\ell_{i,j}$ of the form $X$ or $\neg X$ for propositional variables $X \in \mathcal{X}$, we construct
an FNN $N_\varphi$ that accepts precisely those inputs $\bs{x}$ that encode a satisfying assignment $I : \mathcal{X} \rightarrow \{0,1\}$ 
for the propositional variables used in $\varphi$. 
The construction of $N_\varphi$ involves combining specific gadgets as presented in
Fig.~\ref{sec:lpproblems;fig:3cnf-gadgets}. We have one gadget for each kind of
subformula in $\varphi$, mimicking their semantics in a straightforward manner, and combine them as 
determined by the structure of $\varphi$. To ensure
that $N_\varphi$ only accepts inputs corresponding to an encoded variable assignment,
which is essential for a correct reduction from $3\textsc{Sat}$, the FNN $N_\varphi$ also
includes an ``$x \in \{0,1\}$'' gadget (see Fig.~\ref{sec:lpproblems;fig:3cnf-gadgets})
for each input dimension. Additionally, we use $L_1$ to ensure that each input is within $[0,1]$, which is necessary 
for the idea underlying the ``$x \in \{0,1\}$''-gadgets, as well as
$L_2$ to ensure that the output of the final ``$\bigwedge_{i=1}^n C_i$'' gadget in $N_\varphi$ is exactly $1$ and
the output of all 
``$x \in \{0,1\}$'' gadgets is exactly $0$. For details, we
refer to \cite{10.1007/978-3-030-89716-1_10}.

Turning to quantised settings, we find that while specific arguments for Theorem~\ref{sec:lpproblems;th:standard} 
require additional support, the overall complexity of verifying reachability in the context of linear constraints 
remains unchanged. Remember that $\textsc{Reach}_\mathcal{F}(\text{LP})$ takes as input FNN and LP that are 
already quantised.
\begin{theorem}
    \label{sec:lpproblems;th:unary}
    $\textsc{Reach}_\mathcal{F}(\text{LP})$ is \NP-complete.
\end{theorem}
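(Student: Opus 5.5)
For membership in \NP, we use the certificate view directly. An instance of $\textsc{Reach}_\mathcal{F}(\text{LP})$ already contains the arithmetic $\arith = (n_1,n_2,r,o)$ and the quantised weights, and every such weight is written with $n_1+n_2$ bits. The representation size of the instance is therefore at least linear in the bit-width. The witness is simply an input $\bs{x} \in \arith^d$. It takes $d\cdot(n_1+n_2)$ bits, which is polynomial in the instance size because $d$ is bounded by the size of $\bs{A}^{(1)}$.

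Verification then consists of three polynomial-time steps:
\begin{itemize}
\item Check that $\bs{x}$ satisfies $L_1|_\arith$. This means evaluating each constraint with the operations $+_\arith,\cdot_\arith$ and the comparisons $\sim_\arith$.
\item Compute $N|_\arith(\bs{x})$ layer by layer. Each entry is again a value of $\arith$, obtained by applying $o$ and $r$ after an exact rational operation.
\item Check that the output satisfies $L_2|_\arith$.
\end{itemize}
Every intermediate value is an element of $\arith_{n_1,n_2}$. We need the rounding and overflow maps (standard ones such as round-to-nearest and saturation or wrap-around) to be computable in time polynomial in $n_1+n_2$ and the size of their argument. Granting this, each operation costs polynomial time, and there are polynomially many operations. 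Unlike the rational case, we do not guess an activation pattern and do not use Theorem~\ref{thm:lp-solution-size}. The witness is small simply because the domain is finite with polynomial-size elements.

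For \NP-hardness, we reduce from \textsc{3Sat} using the gadgets of Fig.~\ref{sec:lpproblems;fig:3cnf-gadgets}, now read inside a fixed small arithmetic. We pick, say, a fixed-point arithmetic $\fixarith_{b,f}$ with $f=1$ and $b$ large enough to represent $n$, the number of clauses. Thus $b = O(\log n)$, and the encoding remains polynomial even if widths are written in unary. All weights and biases of the gadgets ($1$, $-1$, $\frac12$, $-(n-1)$) are then exactly representable. To avoid the delicate ``$x\in\{0,1\}$'' gadget, we can instead let $L_1|_\arith$ impose $0 \le x_i \le 1$ and $2x_i \cdot(\ldots)$-free constraints such as $x_i \le 0 \lor$, which are not available. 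The cleaner option is to choose $f=0$, so the integer inputs in $[0,1]$ are exactly $\{0,1\}$ and the Boolean domain comes for free.

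The main point to verify is that no rounding or overflow occurs during evaluation of $N_\varphi$ on Boolean inputs. Every intermediate sum lies in $[-(n-1), n]$, so with $b \ge \lceil\log_2 n\rceil+2$ all operations are exact. Quantised evaluation then coincides with rational evaluation, and $N_\varphi$ outputs $1$ exactly on satisfying assignments. Finally, $L_2|_\arith$ requires the output to equal $1$. The construction takes polynomial time, which completes the proof of Theorem~\ref{sec:lpproblems;th:unary}.
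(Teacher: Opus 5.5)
Your argument is correct. Membership is the same as the paper's: guess $\bs{x}\in\arith^d$ and evaluate $L_1|_\arith$, $N|_\arith$ and $L_2|_\arith$. You are right that Theorem~\ref{thm:lp-solution-size} plays no role here. The paper invokes it for ``correctness'', but witnesses range over $\arith^d$ by definition, so their polynomial size follows directly from the instance already containing numbers written in $\arith$.

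For hardness you take a slightly different and cleaner route. The paper reuses the rational 3\textsc{Sat} network unchanged, including the ``$x\in\{0,1\}$'' gadget with parameter $\frac12$, and only asks for about $\lceil\log n\rceil$ integer bits. You instead choose $\fixarith_{b,0}$, so the constraint $0\le x_i\le 1$ in $L_1|_\arith$ already forces $x_i\in\{0,1\}$. This has three effects:
\begin{itemize}
\item the fractional gadget disappears;
\item only integer parameters $0,\pm1,-(n-1)$ remain;
\item you can check explicitly, with $b\ge\lceil\log_2 n\rceil+2$, that no rounding or overflow occurs, which the paper only gestures at.
\end{itemize}
The price is that your reduction shows hardness only for integer fixed-point arithmetics. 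That suffices for the theorem as stated, but it says nothing about, e.g., floating-point instances, whereas the paper's gadget-based argument is meant to cover any arithmetic with enough integer bits.

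Clean up the write-up in two places. The sentence about ``$2x_i\cdot(\ldots)$-free constraints such as $x_i\le 0\lor$'' is an abandoned fragment. Your initial choice $f=1$, with $\frac12$ listed among the weights, contradicts your final choice $f=0$. Commit to $f=0$, drop the ``$x\in\{0,1\}$'' gadget, and state that the network's only output is the $\bigwedge$-gadget, which $L_2|_\arith$ requires to equal $1$.
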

\begin{proof}
    Membership in \NP relies on the certifier-based
    understanding of \NP and is straightforward. Let $(N|_\arith, L_1|_\arith, L_2|_\arith)$ with
    $\arith = \arith_{n_1, n_2}$. 
    (\textsc{i}) Guess an input $\bs{x} \in \arith$,
    (\textsc{ii}) check whether $L_1|_\arith$ is satisfied,
    (\textsc{iii}) compute $N|_{\arith}(\bs{x})$, and
    (\textsc{iv}) check whether $N|_{\arith}(\bs{x})$ satisfies $L_2|_\arith$. 
    Clearly, $\size{}{\bs{x}}$ is linear in $n_1 + n_2$, which is polynomial in $\size{}{N}$
    since $N$ contains numerical parameters represented in $\arith$. Furthermore, steps (\textsc{ii}) to (\textsc{iv})
    can be done in polynomial time with respect to
    $\size{}{L_1} + \size{}{N} + \size{}{L_2}$ since all numerical parameters are represented in $\arith$. 
    Note that we assume constant time for basic arithmetic operations.
    The correctness is implied by Theorem~\ref{thm:lp-solution-size}, which states that feasible LP have 
    solutions of polynomial size. The reasoning used for Theorem~\ref{sec:lpproblems;th:standard}, as  
    sketched above, carries over to parts (\textsc{i})--(\textsc{iv}).

    Regarding \NP-hardness, 
    we refer to the
    same argument used in Theorem~\ref{sec:lpproblems;th:standard}, namely, the reduction from $3\textsc{Sat}$ translating 
    a 3CNF formula $\varphi$ into an FNN $N_\varphi$.
    The key insight is that the same reduction described above works here because the gadgets shown in Figure~\ref{sec:lpproblems;fig:3cnf-gadgets}
    use only low-precision parameters, specifically $1$, $0$, and $\frac{1}{2}$, which are exactly representable in all non-trivial finite-width arithmetics $\arith$.
    Moreover, the sums in the ``$\bigwedge_{i=1}^n C_i$'' and ``$\bigvee_{j=1}^3 \ell_{i,j}$'' gadgets are
    at most $n$. Thus, it is sufficient to assume that $N_\varphi|_\arith$ to be quantised by any $\arith$ with 
    at least $\lceil \log n\rceil$ bits to represent integer parts.
\end{proof}

The other quantised problem we consider in this section is $\textsc{Reach}(\mathcal{F}, \text{LP})$. In contrast to
$\textsc{Reach}_\mathcal{F}(\text{LP})$, the given FNN contains full-precision rational parameters, and the
quantisation to some finite-width arithmetic $\arith_{n_1,n_2}$ is not assumed a priori, making
$\arith_{n_1,n_2}$ part of the input. At first glance, this may lead one to thinking that this problem 
must be more difficult, since we have to deal with numbers of bit size $n_1+n_2$, and the parameters
$n_1,n_2$ are given as binary numbers. However, it is easily argued that this is not the case.
\begin{theorem} 
    \label{sec:lpproblems;th:binary}
    $\textsc{Reach}(\mathcal{F}, \text{LP})$ is \NP-complete.
\end{theorem}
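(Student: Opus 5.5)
Membership and hardness are handled separately, and both reuse earlier arguments. Hardness is the easy half. I will reduce from $\textsc{Reach}_\mathcal{F}(\text{LP})$, which is \NP-hard by Theorem~\ref{sec:lpproblems;th:unary}; alternatively I can reuse the \textsc{3Sat} reduction behind Theorem~\ref{sec:lpproblems;th:standard} directly. The gadgets of Fig.~\ref{sec:lpproblems;fig:3cnf-gadgets} use only the parameters $0,1,\frac12$, and all intermediate sums are bounded by $n$. So I output the rational FNN $N_\varphi$, the same $L_1,L_2$, and a fixed-point arithmetic $\fixarith_{b,f}$ with $f=1$ and $b=\lceil\log n\rceil+3$. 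Its description is written in binary and has size $O(\log\log n)$, so the reduction is trivially polynomial. In this arithmetic no rounding or overflow ever occurs on the relevant computations, so $N_\varphi|_\arith$ agrees with $N_\varphi$ on inputs in $\{0,\frac12,1,\dots\}$. The ``$x\in\{0,1\}$'' gadgets still force Boolean inputs, and the $[0,1]$ constraint in $L_1$ restricts inputs to $\{0,\frac12,1\}$, where $\frac12$ is excluded as before. Hence $\varphi$ is satisfiable iff the instance is positive.

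For membership, the apparent difficulty is that $n_1,n_2$ are given in binary, so a number of $\arith_{n_1,n_2}$ may need exponentially many bits in the input size. The plan is to show that polynomially many bits suffice anyway. The certificate will consist of the input vector $\bs{x}\in\arith^d$ \emph{together with} a guessed trace of all intermediate values of $N|_\arith(\bs{x})$. I must argue that, whenever a witness exists, one exists whose trace values all have representation size polynomial in $\size{}{N}+\size{}{L_1}+\size{}{L_2}+\log n_1+\log n_2$.

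The key steps are as follows.
\begin{enumerate}
\item Observe that the quantised semantics is piecewise linear. Each node computes $o(r(\cdot))$ of a linear expression followed by $\relu$. Rounding to the grid $2^{-f}\ints$ and overflow clamping or wrapping can be described by guessing, per node, the ``mode'': no overflow versus which saturation or wrap offset, and a ReLU activation bit. Under a fixed mode, rounding becomes an integrality constraint, $2^{f}y\in\ints$ with $|y-z|<2^{-f}$ in the appropriate direction.
\item With the guessed pattern, the whole verification instance becomes a system of linear constraints plus integrality constraints on scaled variables. The coefficients $2^{f}$ and $2^{b}$ have binary size $O(n_1+n_2)$, which is exponential in $\log n_1$. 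To avoid writing them out, I substitute integer variables $k$ with $y=k\cdot 2^{-f}$ and treat $2^{-f}$ symbolically.
\item Invoke Theorem~\ref{thm:lp-solution-size}, or its integer analogue for mixed-integer programs, to bound solution sizes polynomially in the constraint sizes.
\end{enumerate}
Floating-point is handled analogously, additionally guessing each exponent $E$. Since $E$ lies in a range of size $2^{e}$, it has $e$ bits, and that requires care of the same kind.

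Step 2 is the main obstacle, because the scale factor $2^{f}$ is exponentially long. What I would try first is the following observation. Rounding only moves a value by less than one grid unit. A solution of the \emph{rational} relaxation with polynomially many bits (Theorem~\ref{thm:lp-solution-size}) has a denominator that is polynomially sized in bits. Once $f$ exceeds that bit size, such a value is exactly representable, so no rounding occurs. Overflow bounds $2^{b-f-1}$ exceed any polynomially sized solution once $b-f$ is large. Hence I split into two cases. If $n_1,n_2$ are at most polynomial in the instance size, each number of $\arith$ has polynomially many bits and the guess-and-evaluate argument of Theorem~\ref{sec:lpproblems;th:unary} applies verbatim. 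Otherwise I run a hybrid of the two arguments: the parts of the arithmetic beyond polynomial precision act exactly, and the computation reduces to the rational activation-pattern argument of Theorem~\ref{sec:lpproblems;th:standard}. The subtle point to verify is that large $f$ with small $b-f$, and vice versa, still fit one of these regimes after guessing the modes.
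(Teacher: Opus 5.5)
Your hardness half is fine and matches the paper: the \textsc{3Sat} reduction behind Theorem~\ref{sec:lpproblems;th:standard} plus an explicitly chosen arithmetic of sufficient width. Your choice $f=1$ with $b$ logarithmic in $n$ works, because all intermediate values are multiples of $\frac{1}{2}$ of magnitude $O(n)$.

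For membership, the plan you finally settle on is essentially the paper's own two-case split: precision small versus large relative to the rational parameters, with Theorem~\ref{thm:lp-solution-size} applied to the activation-pattern LP in the large case. The claim that is supposed to make the large-precision case work is false. Fixed- and floating-point numbers are dyadic, so a polynomially sized rational solution with denominator $3$ is never representable, however large $f$ is. More importantly, the rational weights of $N$ are themselves rounded: $r(\frac{1}{3})$ has about $f$ significant bits and differs from $\frac{1}{3}$ for every $f$. So the computation never ``acts exactly'' beyond polynomial precision, and rounding errors of order $2^{-f}$ decide exact linear constraints.

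Concretely, take $\fixarith_{b,f}$ with round-to-nearest, $f\ge 4$ even, $b=f+10$, the one-neuron network $y=\relu(\frac{1}{3}x)$, $L_1\colon 0\le x\le 4$ and $L_2\colon y=1$. Then $r(\frac{1}{3})=\frac{1}{3}-\frac{1}{3\cdot 2^{f}}$, and the rational witness $x=3$ gives $y=1-2^{-f}$. A short calculation shows that the witnesses are exactly $x=3+j\cdot 2^{-f}$ with $j\in\{2,3,4\}$. Each has denominator at least $2^{f-2}$ in lowest terms, hence size about $f$, which is exponential in the instance size because $f$ is given in binary. So ``polynomially many bits suffice'' fails in the standard encoding however the regimes are arranged.

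Your Steps~2--3 cannot rescue this either. Small-solution bounds for (mixed-)integer programs are polynomial in the bit size of the coefficients, which here include $2^{f}$, and there is no such bound with $2^{-f}$ ``treated symbolically''. With binary exponent width in floating point it is worse still: the exponent alone needs exponentially many bits.

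What is missing is a succinct witness representation (e.g.\ sparse dyadic numbers such as $3+2^{1-f}$) together with a polynomial-time way to evaluate $N|_\arith$ on such representations, including its dense rounded weights. Be aware that the paper's membership proof rests on the same identification of $N|_\arith$ with the un-rounded activation-pattern LP, and it simply assumes that quantising $N$ takes polynomial time. Following it will therefore not close this gap.
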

\begin{proof}
    $\NP$-membership is shown as in Theorem~\ref{sec:lpproblems;th:unary}:  
    given $(\arith_{n_1,n_2}, N, L_1, L_2)$, we quantise $N$, $L_1$, and $L_2$ using $\arith_{n_1,n_2}$, 
    guess an input $\bs{x} \in \arith_{n_1,n_2}$ and check whether it is a witness for the validity of $(N, L_1, L_2)$.
    Note that this assumes that the quantisation method used for $N$, $L_1$ and $L_2$ works in polynomial time.
    Given that, we need to argue why the size of $\bs{x}$ is polynomial in the size of $(\arith_{n_1,n_2}, N, L_1, L_2)$. 
    We make a case distinction. First, assume that $\size{}{n_1} + \size{}{n_2}$ is larger than the maximum size of any 
    rational parameter of $N$, $L_1$ or $L_2$. Then, guessing $\bs{x}$, which is polynomial in $\size{}{n_1} + \size{}{n_2}$, 
    is sufficient since this also subsumes solutions for the LP resulting from combining $(N, L_1, L_2)$, utilising the 
    activation pattern of $N$ for $\bs{x}$, according to Theorem~\ref{thm:lp-solution-size}.
    Second, assume that $\size{}{n_1} + \size{}{n_2}$ is smaller than the maximum size of any 
    rational parameter of $N$, $L_1$ or $L_2$. Then, guessing $\bs{x}$, which is polynomial in 
    $\size{}{N} + \size{}{L_1} + \size{}{L_2}$ is also sufficient, again using Theorem~\ref{thm:lp-solution-size}.
    In combination, it is sufficient to take $\bs{x}$ which is polynomial in the size of $(\arith_{n_1,n_2}, N, L_1, L_2)$.
    
    \NP-hardness is shown similarly to Theorems~\ref{sec:lpproblems;th:standard} and~\ref{sec:lpproblems;th:unary}
    by establishing a reduction from $3\textsc{Sat}$. In this case, we introduce
    a finite-width arithmetic $\arith_{n_1,n_2}$ with a sufficient number of bits. We observe that $n_1+n_2$
    is polynomial in the size of the 3CNF formula $\varphi$.
\end{proof}

%% file: sections/bv_problems.tex

As seen in Section \ref{sec:lpproblems}, using linear constraints always guarantees solutions of polynomial size 
with respect to the size of the FNN and specification LP. \cite{Henzinger_Lechner_Žikelić_2021} investigated the 
verification of quantised FNNs with respect to 
bit-vector logics, cf.\ \cite{DBLP:journals/mst/KovasznaiFB16}, instead. This is motivated by the fact that BV allow 
non-linear constraints to be specified and single bits of the FNN's in- and output to be addressed. This can avoid 
rounding and overflow problems in contrast to the standard specifications via linear constraints, which crucially 
depend on the type of arithmetic and bit-width used when evaluating an FNN.

First, for already quantised FNN, these non-linear constraints do not change anything complexity-wise.

\begin{theorem}
    \label{sec:bvproblems;th:unary}
    $\textsc{Reach}_\mathcal{F}(\text{BV})$ is \NP-complete.
\end{theorem}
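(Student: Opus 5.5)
For membership in \NP, I would use the same certifier-based argument as in Theorem~\ref{sec:lpproblems;th:unary}, with model checking of BV formulas replacing the LP feasibility checks. Given an instance $(N|_\arith,\varphi_1,\varphi_2)$ with $\arith=\arith_{n_1,n_2}$, the certifier works as follows.
(\textsc{i}) Guess an input $\bs{x}\in\arith^d$. This consists of $d$ bit-vectors of length $\ell=n_1+n_2$.
(\textsc{ii}) Check $\vartheta_{\bs{x}}\models\varphi_1$, where $\vartheta_{\bs{x}}$ maps the input variables to the encodings of the entries of $\bs{x}$.
(\textsc{iii}) Compute $N|_\arith(\bs{x})$ layer by layer in $\arith$.
(\textsc{iv}) Check $\vartheta_{N|_\arith(\bs{x})}\models\varphi_2$.

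The key point is the size bound on the certificate. Since $N|_\arith$ is already quantised, every weight is stored with $n_1+n_2$ bits, or with a size polynomially related to it. Hence $|\bs{x}|=d\cdot(n_1+n_2)$ is polynomial in $\size{}{N|_\arith}$, as $d$ is at most the number of columns of $\bs{A}^{(1)}$. Steps (\textsc{ii}) and (\textsc{iv}) run in polynomial time by Proposition~\ref{prop:bv_nlogspace}. Step (\textsc{iii}) requires polynomially many operations in $\arith$, namely additions, multiplications, rounding, overflow handling and ReLU. Each of these is polynomial in $n_1+n_2$, assuming the rounding and overflow modes are polynomial-time computable, which is the same assumption as in Theorem~\ref{sec:lpproblems;th:unary}. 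Unlike the LP case, no appeal to Theorem~\ref{thm:lp-solution-size} is needed: the search space is exactly $\arith^d$, so the certificate is automatically small.

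For \NP-hardness, the simplest route is a reduction from \textsc{3Sat} that puts all the difficulty into the specification. Given a 3CNF $\psi$ over variables $X_1,\dots,X_m$, take the trivial quantised network $N$ (the identity, realised by ReLU on non-negative inputs) on a single input $x$, in an arithmetic with $\ell=\max(m,2)$ bits, or $m$ input dimensions with one relevant bit each. Each propositional variable $X_k$ becomes a bit test $(x\BitAnd c_k)=c_k$, where $c_k=2^{k-1}$. Each clause becomes a disjunction of such atoms and their negations, expressed via $\neg$ and $\land$. Let $\varphi_1$ be the translated formula, and take $\varphi_2$ to be trivially true, for instance $(y=y)$. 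To prevent sign or ReLU effects from interfering, I would restrict to non-negative $x$ by fixing the top bit to $0$ in $\varphi_1$.

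Alternatively, the hardness proof of Theorem~\ref{sec:lpproblems;th:unary} can be reused. This requires expressing its LP constraints $x\in[0,1]$ and output $=1$/$=0$ as BV equalities on the specific bit patterns of the quantised values $0$, $\tfrac12$ and $1$. That is possible because these are fixed constants in $\arith$.

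I expect the main obstacle to be the membership side. One must make sure that evaluating $N|_\arith$ on a guessed input is genuinely polynomial, that is, that $n_1+n_2$ is polynomially bounded by the instance size and that rounding and overflow are efficiently computable. I would state these assumptions explicitly, consistent with the paper's treatment in Theorem~\ref{sec:lpproblems;th:unary}. A smaller point is checking that the bit-level encoding of values of $\arith$ used by BV matches the representation used by the network.
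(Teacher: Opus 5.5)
Your membership argument is the paper's: guess $\bs{x}\in\arith^d$, model-check $\varphi_1$ and $\varphi_2$ via Proposition~\ref{prop:bv_nlogspace}, and evaluate $N|_\arith$. The certificate is small because $n_1+n_2$ is polynomial in $\size{}{N|_\arith}$.

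For hardness you take a genuinely different route. The paper reuses the \textsc{3Sat} gadget network from Theorem~\ref{sec:lpproblems;th:unary} and asserts that this construction needs no input or output specification. You instead encode the whole 3CNF in $\varphi_1$ via bit tests $(x\BitAnd c_k)=c_k$, with a trivial network and a tautological $\varphi_2$. Hardness then comes entirely from the specification language, essentially the \NP-hardness of BV satisfiability transferred to $\textsc{Reach}_\mathcal{F}(\text{BV})$.

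Your reduction is self-contained. It does not depend on how $0$, $\frac12$, $1$ or rounding behave in $\arith$. The paper's route has the merit of showing that hardness already stems from the quantised network rather than from the logic. As written, though, it glosses over the fact that the LP construction does use $L_1$ (inputs in $[0,1]$) and $L_2$ (final output $1$, gadget outputs $0$). Making it rigorous needs exactly the translation into BV constraints that your alternative paragraph sketches. In BV you could even impose $x=\text{enc}_\ell(0)\lor x=\text{enc}_\ell(1)$ directly, with $\lor$ via De Morgan.

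Two small repairs to your main reduction:
\begin{itemize}
\item With $\ell=m$, the top bit carries $X_m$, so fixing it to $0$ would break the reduction. Either take $\ell=m+1$ or drop that constraint. It is unnecessary anyway, since $\varphi_2$ is trivially true and the network's output is irrelevant.
\item Choose an arithmetic in which every $\ell$-bit pattern encodes some element of $\arith$, for instance two's-complement fixed-point with no fractional bits. Then the BV models of $\varphi_1$ correspond exactly to inputs $\bs{x}\in\arith^d$. The floating-point set defined in the preliminaries excludes some exponent patterns.
\end{itemize}
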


\begin{proof}
    Membership in \NP is shown as in Theorem~\ref{sec:lpproblems;th:unary}. Let $(N|_\arith, \varphi_1, \varphi_2)$ with
    $\arith = \arith_{n_1, n_2}$. 
    (\textsc{i}) Guess an input $\bs{x} \in \arith$,
    (\textsc{ii}) check whether $\bs{x} \models \varphi_1$, 
    (\textsc{iii}) compute $N|_{\arith}(\bs{x})$, and
    (\textsc{iv}) check whether $N|_{\arith}(\bs{x}) \models \varphi_2$. 
    First, $\size{}{\bs{x}}$ is linear in $n_1 + n_2$ which is polynomial in $\size{}{N}$
    since $N$ contains numerical parameters represented in $\arith$. By Prop.~\ref{prop:bv_nlogspace}, 
    model checking for BV formulas can be done in polynomial time.

    Regarding \NP-hardness,
    we use the same argument as in Theorem~\ref{sec:lpproblems;th:unary}. As the lower bound construction does not 
    need any input or output specification, \NP-hardness already holds for BV specifications as well.
\end{proof}

Analogously to Section \ref{sec:lpproblems}, problem $\textsc{Reach}(\mathcal{F}, \text{BV})$ assumes the FNN given with full-precision parameters together with some finite-width arithmetic as part of the input. This problem was already considered in \cite{Henzinger_Lechner_Žikelić_2021}, where only a lower bound was shown.

\begin{theorem}[\cite{Henzinger_Lechner_Žikelić_2021}]
    \label{thm:bvproblems:pspace_hardness}
    $\textsc{Reach}(\mathcal{F}, \text{BV})$ is \PSPACE-hard.
\end{theorem}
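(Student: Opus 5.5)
Since the statement is attributed to \cite{Henzinger_Lechner_Žikelić_2021}, the natural route is to reconstruct their reduction from a canonical \PSPACE-complete problem. The relevant source of hardness is that $n_1,n_2$ are given in binary. The bit-length $\ell = n_1+n_2$ can therefore be exponential in the size of the instance. A single BV variable then holds an exponentially long bit-string, and the BV formulas $\varphi_1,\varphi_2$ can test bitwise relations between such strings using only polynomially many operators. I would reduce from the acceptance problem for a polynomial-space-bounded deterministic Turing machine $M$, or equivalently from a succinct reachability problem over configurations. The witness $\bs{x}$ would encode, inside exponentially long bit-vectors, the entire computation history of $M$: an exponential sequence of polynomial-size configurations, concatenated into one or a few registers of width $\ell$.

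\textbf{Step 1: consistency via the network.} The key steps would be as follows. First, fix the arithmetic: a fixed-point $\fixarith_{b,f}$ with $b$ exponential but written in binary, so that $\ell$ bits suffice for $2^{\mathrm{poly}}$ configuration slots. Second, design a rational FNN $N$ of polynomial size whose quantised evaluation $N|_\arith$ performs shifts of the bit-string by a fixed offset. This can be done by multiplication with a power of two, which the arithmetic evaluates exactly, with overflow discarding the high bits. The shift aligns configuration $i$ with configuration $i+1$, and $N$ outputs both the original and the shifted vector. Third, write $\varphi_2$ as a bitwise formula using $\BitAnd$, $\BitOr$, $\BitNeg$ and constants. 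It should express that at every position, the local window of the current configuration and the corresponding window of the next configuration obey $M$'s transition table. Because the operations are bitwise and uniform across positions, a polynomial-size formula checks all exponentially many positions simultaneously, given suitable periodic mask constants. Finally, $\varphi_1$ and $\varphi_2$ fix the initial configuration, and existence of an accepting configuration becomes a check that some masked slot is nonzero.

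\textbf{Step 2: representing the constants.} The main obstacle I expect is that BV constants must be written as numbers $0\le c\le 2^\ell-1$. Periodic masks with exponentially many repetitions, such as one marking every configuration boundary, would naively have exponential size. I would first try to generate these masks inside the network instead. A geometric-series-like computation, or repeated doubling through the $\mathrm{poly}$ many layers of $N$ (each layer doubling the pattern via shift-and-OR, realised with $\relu$ and additions on disjoint supports), yields the periodic patterns with only polynomially many layers. The network outputs them, so $\varphi_2$ can reference them as variables.

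\textbf{Step 3: exactness of the arithmetic.} A secondary check is that the rounding and overflow modes make the multiplications by powers of two exact modulo $2^b$. I would restrict to modes where this holds, such as wrap-around overflow and truncation. This restriction suffices for a hardness result, since we only need one arithmetic in $\mathcal{F}$.
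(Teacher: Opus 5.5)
The paper gives no proof of this statement; it imports it from Henzinger et al. Your reconstruction therefore has to stand on its own. Its skeleton is sound:

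\begin{itemize}
\item Reduce from a polynomial-space TM, so a configuration has polynomial length $p$.
\item The history has $2^{O(p)}$ slots and fits into a word whose length $\ell$ has a polynomial binary encoding.
\item The network outputs polynomially many shifted copies $x\cdot 2^{p-1}, x\cdot 2^{p}, x\cdot 2^{p+1},\dots$. These weights have $O(p)$ bits and are exact under wrap-around.
\item Since $\BitAnd,\BitOr,\BitNeg$ act position-wise, one atom of the form $t=\BitNeg 0$ enforces the transition relation at all $\ell$ positions simultaneously.
\end{itemize}

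You also identify the right obstacle: BV constants of polynomial size cannot describe masks that are periodic across exponentially many slots.

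\textbf{The gap is in Step 2, which does not work as written.} Repeated doubling does not yield a polynomial-size network. To double a block of $2^i$ configuration slots you must shift by $p\cdot 2^i$ positions, i.e.\ multiply by the weight $2^{p2^i}$. That weight's binary encoding has $p2^i+1$ bits. In the last rounds, $i\approx\log(\ell/p)$, this is $\Theta(\ell)$, which is exponential in the instance size.

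Having only polynomially many layers does not save this. Each layer shifts by at most an amount bounded by the bit-size of its weights, so polynomially many layers of polynomial-size weights shift by only polynomially many positions in total. Evaluating $(2^{Kp}-1)/(2^p-1)$ literally fails for the same reason. More generally, if the mask is computed from constants alone with dyadic weights of polynomial size, every intermediate value is an exact dyadic number of polynomial bit-size, since no rounding or overflow occurs when $b,f$ are exponential. Its two's-complement pattern is then eventually constant, never periodic over exponentially many slots. Your alternative, a ``geometric-series-like computation'', is too vague to check as stated.

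Two fixes are available.
\begin{enumerate}
\item[(a)] Use a non-dyadic rational weight. The bias $1/(2^p-1)=\sum_{j\ge 1}2^{-jp}$ has size $O(p)$. Its quantisation to $f$ fractional bits is already the period-$p$ mask, so if the history lives in the fractional bits you get the mask directly. This is the periodic-expansion phenomenon that the paper's bit-extraction lemmas in the appendix are built to handle.
\item[(b)] Let the witness supply the mask as an extra input $z$. Have $N$ output $z$ and $z\cdot 2^p$, and add $(z\cdot 2^p)\BitOr 1 = z$ to $\varphi_2$. This forces $z[q]=z[q-p]$ for $q\ge p$ and $z[q]=1$ iff $q=0$ for $q<p$, i.e.\ it pins $z$ uniquely to $\sum_j 2^{jp}$.
\end{enumerate}

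A smaller point you also need to handle concerns the output activation. Every layer applies $\relu$, including the last one, so $N$ cannot literally pass through a word whose top bit is $1$ (a negative two's-complement value); it is clipped to $0$. The same happens to any shifted copy $x\cdot 2^s$ that wraps into the sign bit. You must either keep the top bits of all encoded words at $0$, or make $\varphi_2$ reject clipped outputs. One way is a marker bit that every unclipped output must carry, so that a zeroed output cannot spuriously satisfy the consistency checks.
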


The remainder of this section is dedicated to establishing upper bounds. While the lower bound in \cite{Henzinger_Lechner_Žikelić_2021} is independent of the arithmetic, the upper bounds depend on them. 
We therefore distinguish two cases and establish the following.

\begin{theorem}
    \label{thm:bvproblems_pspace}
    $\textsc{Reach}(\mathcal{F}, \text{BV})$ is \PSPACE-complete for fixed-point and floating-point arithmetic with constant exponent width.
\end{theorem}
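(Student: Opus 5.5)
Hardness is Theorem~\ref{thm:bvproblems:pspace_hardness}, so only membership in \PSPACE is needed. Since \NP-style guessing gives $\NP^{}$ of a witness of size $d\cdot(n_1+n_2)$, which is exponential in the input size when $n_1,n_2$ are given in binary, we instead use a deterministic (or nondeterministic, then Savitch) procedure that never writes down a full number of width $\ell=n_1+n_2$. The idea is to treat every quantity of bit-width $\ell$ as a bit-stream that is accessed bit by bit, with only counters of size $O(\log \ell)=O(\size{}{n_1}+\size{}{n_2})$ and polynomially many auxiliary bits stored.

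Concretely, I would first observe that for fixed-point arithmetic, and for floating-point arithmetic with constant exponent width $n_2$ (so numbers are essentially fixed-point numbers scaled by one of constantly many powers of two, with an $n_1$-bit significand), every value computed by $N|_\arith$ is an integer multiple of $2^{-f}$ of absolute value below $2^{b}$ (respectively a significand of length $n_1$ with exponent from a constant-size set). The quantisation of each rational weight/bias of $N$ and of the constants of $\varphi_1,\varphi_2$ is then computable bitwise: the $j$-th bit of $r(p/q)$ can be obtained by long division using space polynomial in $\size{}{p}+\size{}{q}+\log\ell$. Next, I would define, by induction on the layer index, a space-efficient oracle $\mathrm{Bit}(i,v,j)$ returning the $j$-th bit of the (quantised) value of node $v$ in layer $i$ on the input $\bs{x}$. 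The key is that addition, multiplication, rounding, overflow handling and ReLU over $\ell$-bit numbers are all in logspace (indeed in $\mathrm{TC}^0$ / computable bit-by-bit with carries recomputed) relative to oracle access to the operand bits: e.g. bit $j$ of a sum of $m$ numbers depends on a carry which can be recomputed from lower bits, and bit $j$ of a product is obtained by iterated addition of shifted partial products. Composing logspace-in-$\ell$ computations across $k$ layers by recomputation (never storing intermediate vectors) yields space $O(k\cdot\mathrm{poly}(\log\ell,\size{}{N}))$, which is polynomial in the input since $k\le\size{}{N}$.

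The input $\bs{x}$ itself is the remaining problem, as it cannot be stored. Here I would reorganise the existential search: rather than guessing $\bs{x}$, the algorithm iterates lexicographically over all $\bs{x}\in\arith^d$ while storing it? That needs $d\ell$ space, too much. So instead I would argue via an alternating/QBF-style characterisation: since bitwise BV semantics evaluates each atom $t_1=t_2$ as $\forall j\,(\text{bit}_j(t_1)=\text{bit}_j(t_2))$, the full condition ``$\exists\bs{x}: \bs{x}\models\varphi_1 \wedge N|_\arith(\bs{x})\models\varphi_2$'' can be expressed as a succinct circuit-satisfiability question over exponentially many input bits where the circuit has a uniform description of polynomial size with polylog-depth per layer structure. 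I would then show the network computation over $\ell$ bits can be simulated by a polynomial-length sequence of quantifier alternations over bit indices (each bit of a node value expressed as a formula with bounded quantifiers over positions, as in the $\mathrm{FO}$/$\mathrm{TC}^0$ descriptions of arithmetic), producing a QBF-like sentence of polynomial size whose truth is decidable in \PSPACE. The main obstacle is exactly this step: handling the existential over exponentially many input bits, which forces quantifying over the input bits in an order compatible with the bit-serial evaluation; I expect that this is where the constant exponent width matters, since otherwise floating-point alignment shifts make the dependency structure between positions non-local, and my first attempt would be to show that with constant exponent width each output bit depends on input bits through a carry structure expressible with polynomially many alternations of bit-index quantifiers.
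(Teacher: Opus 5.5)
\paragraph{Gap: the existential over the input.} Your hardness argument is fine. So is the observation that, for a \emph{fixed} input available as an oracle, every bit of every neuron value can be recomputed in space polynomial in $\size{}{N}$ and $\log\ell$. The gap is the step you flag yourself: the existential quantifier over $\bs{x}$, which ranges over $d\cdot\ell$ bits, i.e.\ exponentially many. This step is the whole content of the \PSPACE upper bound, not a side issue, and the route you sketch does not close it.

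Recasting the question as satisfiability of a succinctly described circuit with exponentially many input bits only gives \NEXPTIME. Succinct circuit satisfiability is \NEXPTIME-complete, and guessing an exponential witness is exactly how the paper obtains its \NEXPTIME bound for unrestricted floating point. Likewise, an existential over exponentially many bit positions followed by an FO/$\mathrm{TC}^0$-style description of the arithmetic does not collapse to a polynomial-size QBF without further structure. Moreover, your recomputation-based oracle $\mathrm{Bit}(i,v,j)$ cannot be combined with guessing $\bs{x}$ on the fly. It reads operand bits repeatedly and out of order, so the guessed bits would have to be stored, which is precisely what you cannot afford. Treating each product as a general $\ell\times\ell$-bit multiplication also rules out any single-pass, small-memory evaluation.

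\paragraph{The missing idea.} The paper gives up random access and evaluates everything in one synchronised pass over bit positions. All numbers involved (inputs, intermediate neuron values, outputs) are processed together, one bit-slice per step. Only a polynomial-size summary is carried from one position to the next:
a bit counter of $O(\log\ell)$ bits;
one carry per neuron, whose size depends on fan-in and weights but not on $\ell$, because every multiplication is by a fixed weight whose bits come from $\text{GetBit}$;
a guessed activation pattern and guessed rounding carries, checked at the last position;
one equality flag per BV atom, which works because BV operations are bitwise, so each atom is checked slice by slice.

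This makes the input-output relation of $N|_\arith$ and the model sets of $\varphi_1,\varphi_2$ languages of succinctly represented NFA. The existential over $\bs{x}$ is then discharged by guessing one bit-slice at a time, i.e.\ by non-emptiness of the succinct product NFA. That is in nondeterministic polynomial space, hence in \PSPACE (Proposition~\ref{prop:succinctNFAPSPACE}). Constant exponent width is needed so that floating-point alignment shifts are bounded by a constant, leaving only constantly many mantissa bits per neuron to buffer in the state. Your closing remark about quantifying ``in an order compatible with the bit-serial evaluation'' points this way, but without such a polynomial-size per-position summary the membership proof is not established.
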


We show this by reducing $\textsc{Reach}(\mathcal{F}, \text{BV})$ to the problem of checking 
non-emptiness for non-deterministic finite automata (NFA), represented in a succinct way. Using automata to recognise 
relations induced by linear arithmetics (cf.\ \cite{DBLP:conf/tacas/WolperB00} and 
\cite[Chp.~14]{esparzaAutomataTheoryAlgorithmic2023a}) or for recognising the input-output behaviour of rational 
FNN in particular \cite{10.1007/978-3-031-66159-4_19} was already explored in previous work. We show that these 
results can be extended to quantised FNNs as well and use it for a complexity analysis.

Unlike explicit automata defined by state transition tables, a \emph{succinct NFA} describes a machine 
with an exponentially large state space via a compact, implicit representation.
Formally, an NFA with state space of size $2^n$ is succinctly represented if each state has a polynomial 
description size in $n$ and there are two polynomial-time algorithms: (1) \textsc{Trans} takes two state 
descriptions $q, q'$ and a symbol $\sigma$ and decides whether there is a transition from $q$ to $q'$ with 
symbol $\sigma$. \textsc{Final} takes a state description $q$ and decides whether $q$ is a final state. 
The problem of deciding non-emptiness for such NFAs is equivalent to the reachability problem in succinctly 
represented graphs, which was explored in previous work.

\begin{proposition}[\cite{DBLP:journals/iandc/GalperinW83}]
\label{prop:succinctNFAPSPACE}
    The non-emptiness problem for succinctly represented NFA is in \PSPACE.
\end{proposition}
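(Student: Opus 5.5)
The plan is to decide non-emptiness by a nondeterministic search for an accepting run that only ever stores one state description and a counter. The space bound then follows from Savitch's theorem ($\textsc{NPSpace} = \PSPACE$).

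Fix a succinct NFA with $2^n$ states. Each state has a description of length at most $p(n)$ for some polynomial $p$, and the algorithms \textsc{Trans} and \textsc{Final} run in polynomial time. I assume the initial state (or a polynomial-time test for initiality) is part of the representation, and that each alphabet symbol $\sigma$ has polynomial size; otherwise \textsc{Trans} could not run in polynomial time on it anyway.

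The first step is the standard shortest-path observation. If the automaton accepts some word, take an accepting run of minimal length. No state repeats in it, since a repeated state would let us cut out a loop and get a shorter accepting run. So the run has at most $2^n$ transitions, and a binary counter with $n+1$ bits suffices to bound the search.

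The nondeterministic procedure works as follows.
\begin{enumerate}
\item Initialise a variable $q$ with the initial state description and set the counter to $0$.
\item If $\textsc{Final}(q)$ holds, accept.
\item If the counter exceeds $2^n$, reject.
\item Otherwise, guess a string $q'$ of length at most $p(n)$ and a symbol $\sigma$. Run $\textsc{Trans}(q,\sigma,q')$; if it rejects, reject this branch. If it accepts, set $q := q'$, increment the counter, and go back to the second step.
\end{enumerate}
At any moment the machine stores only $q$, the guessed $q'$ and $\sigma$, the counter, and the workspace of one polynomial-time call. Since a polynomial-time call uses at most polynomial space, the total space is polynomial in $n$ and the size of the representation. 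Correctness is immediate: an accepting branch traces a genuine path from the initial state to a final state, and conversely the short accepting run from the first step can be guessed. Hence the problem is in $\textsc{NPSpace}$, which equals $\PSPACE$.

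The only delicate point is that guessed strings might not be valid state descriptions. I would handle this by assuming that \textsc{Trans} rejects (or is preceded by a polynomial-time validity check that rejects) malformed descriptions. Alternatively, one can avoid nondeterminism entirely and apply Savitch's divide-and-conquer directly. Define $\mathrm{Reach}(q,q',k)$ as true if $q'$ is reachable from $q$ in at most $2^k$ steps. Compute it by cycling through all candidate midpoints $m$ of length at most $p(n)$ and recursing on $(q,m,k-1)$ and $(m,q',k-1)$. The recursion depth is $n$ and each frame uses $O(p(n))$ space, giving $O(n\cdot p(n))$ space overall. Finally, iterate over all final candidates $q_f$ with $\textsc{Final}(q_f)$.
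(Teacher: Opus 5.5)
Your proof is correct and follows the route the paper relies on. The paper gives no argument of its own: it only observes that non-emptiness is reachability in a succinctly represented graph and defers to Galperin--Wigderson. Your argument is the standard one behind that citation: guess a path of length at most $2^n$ while storing only the current state description and an $(n+1)$-bit counter, then apply Savitch's theorem, or run the Savitch recursion directly on the implicit state graph.

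Your explicit assumptions fill in details the paper's definition of succinct NFA leaves unstated. These are that the initial state is given or testable, and that malformed descriptions are rejected; the latter could also be avoided by counting up to $2^{p(n)+1}$ over all candidate strings.
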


Succinctly represented NFAs are closed under intersection and union. The state of the product/union automaton 
is the tuple of the states of the component automata. The size of the combined state description is the sum 
of the individual sizes, which remains polynomial. Functions \textsc{Trans} checks for all components to be true. 
Function \textsc{Final} checks that both states are final (for intersection) or at least one is (for union).

The input-output behaviour of rational FNNs evaluated over a given fixed-width arithmetic can be captured by 
succinct NFA. We start with the case of fixed-point arithmetic, encoding fixed-point numbers over the alphabet 
$\Sigma=\{0,1\}$ with least-significant bit first and in two-complement representation. I.e.\ 
the word
$b_f b_{f-1} \cdots b_1 a_0 a_1 \cdots a_{b-1} \in \{0,1\}^{b+f}$
is the encoding of the number $x$ with
\[
x = \sum_{i=1}^f b_i \cdot 2^{-i} + \sum_{i=0}^{b-2} a_i \cdot 2^i + a_{b-1} \cdot -2^{b-1}\ .
\]
Now fix an FNN $N$ with parameters $\theta$, input and output dimensions $m$ and $n$, and a fixed-point arithmetic $\fixarith_{b, f}$ with bit-width $k=b+f$. Its input-output relation is $R_{N,\fixarith_{b, f}} = \{(x, y) \in\{0,1\}^{mk} \times \{0,1\}^{nk} \mid N|_{\fixarith_{b, f}} (x) = y\}$.

\begin{lemma}
    There exists a succinct NFA $M_{N,\fixarith_{b, f}}$ with description size in $\text{poly}(|N|+\log(b+f))$ accepting $R_{N,\fixarith_{b, f}}$.
\end{lemma}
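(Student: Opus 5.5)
We construct $M_{N,\fixarith_{b,f}}$ compositionally, reading the interleaved encoding of all input and output components least-significant bit first. In the $t$-th step of the run the automaton reads the $t$-th bit of every input coordinate and of every output coordinate, so the alphabet is $\{0,1\}^{m+n}$. If a binary alphabet is required, this can be simulated by a state counter of size $O(\log(m+n))$. The intended invariant is that the state stores, for every neuron of $N$, the bounded information needed to compute its value one bit at a time. The key observation is that the state \emph{description} may be polynomial in $|N|$ and $\log(b+f)$, even though the number of steps is $b+f$, which is exponential in the input size because $b,f$ are given in binary.

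First, we build succinct NFAs for the elementary operations on $k$-bit LSB-first words, where $k=b+f$. We need a counter $t\in\{0,\dots,k\}$ stored in $O(\log k)$ bits, which also tells the automaton when the fractional part ends (at $t=f$) and when the sign bit is read (at $t=k-1$).
\begin{itemize}
\item \textbf{Addition.} Addition of two fixed-point numbers needs a single carry bit.
\item \textbf{Multiplication by a rational constant.} For a constant weight $w=p/q$ in $N$, the product $w\cdot x$ is computed by serial arithmetic whose carry or remainder is bounded by $O(|p|+q)$, hence has size $O(|w|)$.
\item \textbf{Rounding and overflow.} These are handled as follows. Bits below position $f$ of the exact product are not emitted; instead their effect on rounding is kept in a bounded summary (a remainder modulo $q$ together with a sticky bit). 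Overflow is detected by tracking whether the carry or high-order bits beyond position $k$ are consistent with sign extension; saturation or wrap-around is then decided at the last step.
\item \textbf{ReLU.} ReLU needs the sign bit, which arrives last. The automaton therefore nondeterministically guesses the sign at the start, outputs either the input bits or zeros accordingly, and verifies the guess when $t=k-1$.
\end{itemize}

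For a neuron computing $\relu(o(r(\sum_j w_j x_j + \beta)))$, the weighted sum is a single serial linear combination. Its carry is bounded in absolute value by $\sum_j |w_j|$ scaled by the denominators, so it is representable in $\mathrm{poly}(|N|)$ bits. One subtlety is that the order of rounding matters: $\cdot_\arith$ and $+_\arith$ round after each operation. Here we emulate each binary operation as a separate gadget, each with its own small state. Since every gadget's output is again a $k$-bit LSB-first stream, gadgets compose.

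Second, we compose layers. Each intermediate neuron value is an internal stream that the product automaton guesses bit by bit. We take the product of all gadget automata, one per operation in $N$, and synchronise them on shared streams. Intersection and projection of succinct NFAs keep the description polynomial, as noted before the statement. The resulting state is the tuple of all gadget states plus the counter, of size $\mathrm{poly}(|N|+\log(b+f))$.
\begin{itemize}
\item \textsc{Trans} checks each gadget's local transition rule in polynomial time.
\item \textsc{Final} checks that the counter equals $k$ and that every gadget's end condition holds: the sign guesses are verified and the rounding/overflow decisions match the emitted bits.
\end{itemize}

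The main obstacle is rounding. Rounding modes such as round-to-nearest depend on bits below the cut-off. For truncation-type rounding toward $-\infty$ in two's complement, the output bits are simply the exact bits from position $f$ upward. For round-to-nearest or round-toward-zero, the rounding increment is known only after the low bits are seen, which happens before the kept bits in LSB-first order, so a carry suffices. Round-toward-zero, however, needs the sign, which arrives last. We handle this by nondeterministically guessing the sign (and any overflow outcome) at the start and verifying it at the end. This works because each guess is a constant number of bits per gadget, and correctness of the run is checked by \textsc{Final}.
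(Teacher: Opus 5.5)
Your construction is essentially the paper's, just organised more modularly. Both use synchronous LSB-first reading, a bounded carry per linear operation, and nondeterministic guesses (ReLU signs, rounding and overflow outcomes) made at $t=0$ and checked by \textsc{Final}. Composing per-operation gadgets over guessed intermediate streams is a cleaner way to respect rounding after every $\cdot_{\arith}$ and $+_{\arith}$.

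\textbf{The gap is in the multiplication gadget.} Your $O(|p|+q)$ carry bound holds only if the weight enters the product as the exact rational $p/q$. In that case you are checking $qZ+d=pX$ in units of $2^{-f}$, and hence recognising $y=r(px/q)$. But $N|_{\fixarith_{b,f}}$ uses the weights \emph{represented in} $\fixarith_{b,f}$; this is what the paper's GetBit lemma extracts. So the factor is $W/2^f$ with $W$ a $k$-bit integer, and $r(WX/2^f)$ is a $k$-bit by $k$-bit product that no $k$-independent carry can track.

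In fact, no NFA of the claimed size reading this synchronous encoding accepts the relation. Take a single weight $w=1/3$, with $b$ integer bits and $f$ fractional bits, $f$ even and $b\le f$.
\begin{itemize}
\item Then $W\in\{(2^f-1)/3,(2^f+2)/3\}$, which is odd or twice an odd number.
\item For integer inputs $0\le x<2^{b-1}$, the product $Wx/2^f$ is exact, nonnegative and in range.
\item During the first $f$ steps the input track is all zeros, while the output track spells $Wx \bmod 2^f$. This value is injective in $x$.
\item The corresponding prefix/suffix pairs therefore form a fooling set of size $2^{b-1}$.
\end{itemize}
Hence state descriptions need $\Omega(b)$ bits, which is exponential in $\log(b+f)$. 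The paper's own proof has the same flaw. It updates a carry it claims is $O(i\cdot\lambda)$ using only ``the $t$-th bit of each weight'', and that cannot implement serial multiplication by a $k$-bit constant.

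What your argument actually establishes is the lemma for a different semantics: weights stay exact rationals, and only the results of operations are rounded. You need to state this assumption explicitly, because it is not how $N|_{\arith}$ is defined.

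Even under that semantics, one step needs repair. The sub-LSB part of $pX/q$ is $(pX \bmod q)/q$. It depends on all of $X$ and is never read as ``low bits''. So your remark that the low bits arrive before the kept bits in LSB-first order does not apply here. It holds only for products of two fixed-point streams, which is exactly the case that fails. The fix is as follows.
\begin{itemize}
\item At $t=0$, guess the residue $d$, with $|d|<q$ and its exact range fixed by the rounding mode.
\item Also guess the overflow outcome; for wrap-around this is the multiple $m$ with $q(Z+2^k m)+d=pX$.
\item Verify this identity serially with your bounded carry.
\item Let \textsc{Final} check the carry left after position $k$, which is where the term $q2^km$ appears.
\end{itemize}
With this change, the rest of your construction goes through.
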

\begin{proof}
    We construct $M_{N,\fixarith_{b, f}}$ to process the input-output relation bitwise. A state $q_t=(t, \bs{c}, \bs{r}, \bs{c}_{\text{init}}, \bs{w})$ is a tuple, where $t \in \{0, \cdots, k\}$ is a bit counter, $\bs{c}\in\mathbb{Z}^{|\theta|}$ represents a vector of carry values for all linear operations, $\bs{r}\in\mathbb{Q}^{|\theta|}$ is a fractional accumulator for rounding and $\bs{w} = \{0,1\}^{|\theta|}$ is the activation pattern. 
    At $t=0$, the automaton non-deterministically guesses the activation pattern $\bs{w}$ and the vector $\bs{c}_{\text{init}}$, which represents the carry bit that would be generated by rounding the fractional result of the scalar products. The accumulator $\bs{r}$ is initialised to 0. The accumulator aggregates computations shifted below the LSB, in order to verify the rounding decisions.
    
Note that for a single neuron with $i$ inputs and maximum weight $\lambda$, the carry is bounded 
by $\mathcal{O}(i\cdot \lambda)$ and especially independent of the bit-width $k$. Additionally, the fractional 
accumulator is always bounded by the size of the weights, which are multiplied. Thus, each state needs space 
$\mathcal{O}(\log k + |N|)$ which is polynomial in the input size. Note that weights are not stored in the state. 
Instead, we use a function $\text{GetBit}(w, j)$ that computes the $j$-th bit 
of a rational weight $w=p/q$ in polynomial time (see Lemma~15 in Appendix).

At step $t$, $M_{N,\fixarith_{b, f}}$ reads the $t$-th bits of input $x$ and output $y$. It updates the carry 
state from $\bs{c}$ to $\bs{c}'$ by using the $t$-th bit of each weight, which can be computed in polynomial time 
(for detailed information on the calculation of carry values consider \cite{DBLP:conf/tacas/WolperB00}). Regarding 
the activation pattern, if $\bs{w}_i=1$ then it checks whether $y_i$ is the correct sum bit or $y_i=0$, when 
$\bs{w}_i=0$, meaning that the corresponding ReLU is inactive.

For $t=k$, a state is accepting when the corresponding sign bit of every neuron's internal sum is consistent 
with the activation pattern. Additionally, the automaton verifies consistency. For ReLU, if the accumulated 
carry indicates that the result should have been negative, but $\bs{w}_j=1$, or positive but $\bs{w}_j=0$, this 
may imply overflow or incorrect guessing. Specifically, if the carry into the sign bit differs from the carry out, 
an overflow occurred. The NFA checks that, if overflow to a positive value occurred, the output $y$ represents 
the maximum representable value, and similarly for a negative overflow. Moreover, it is verified that the 
accumulated tail $\bs{r}$ justifies the initial guess $\bs{c}_{\text{init}}$. For the rounding mode 
round-to-nearest, it checks whether $\bs{r} \geq 0.5 \iff c_{\text{init}}=1$.
\end{proof}

Next, we will consider the case of floating-point arithmetic. The crucial difference here 
is that while addition and multiplication in fixed-point arithmetic are local operations, meaning that the 
computation of the $i$-th bit depends only on the $i$-th bit of the operands and a carry from position $i-1$, 
in floating-point arithmetic addition and multiplication involve an alignment step. When two numbers 
$x_1=m_1\cdot 2^{e_1}$ and $x_2=m_2\cdot 2^{e_2}$ have different exponents, the mantissa of the smaller number 
must be right-shifted by $\delta=|e_1-e_2|$. Therefore, a finite automaton reading the numbers bitwise would 
need to buffer $\mathcal{O}(\delta)$ bits to align the operands. As $k$ is given in binary, this would lead to 
a description of the resulting NFA of exponential size. It remains open whether there is another way to obtain 
a \PSPACE upper bound for the general floating-point arithmetic or whether the lower bound of 
\cite{Henzinger_Lechner_Žikelić_2021} can be lifted. However, to obtain a \PSPACE upper bound when using 
floating-point arithmetic, we can restrict to formats where the exponent width $e$ is fixed, while the 
precision of the mantissa is variable. This is reasonable for hardware designs where dynamic range is static 
but precision varies (for example bf16 vs. fp32).

We encode floating-point numbers with bounded exponent width $E$ and mantissa width $k$ as a word
$se_{0} \cdots e_{E-1} m_1 \cdots m_{k} \in \{0,1\}^{E+k+1}$
encoding the number
\[
x= (-1)^s \cdot 2^{\sum_{i=0}^{E-1} e_i \cdot 2^i} \cdot (1+\sum_{i=1}^k m_i \cdot 2^{-i})\ .
\]

\begin{table*}[t]
    \vspace{1em}
    \centering
        \scalebox{0.75}{%
        \setlength{\tabcolsep}{10pt}
\renewcommand{\arraystretch}{1.5}
        \begin{tabular}{lcc}
        \toprule
        & \textbf{Linear specifications} & \textbf{Bit-vector specifications} \\
        \midrule
        $\textsc{Reach}_\mathbb{Q}$ & \NP-complete (\cite{10.1007/978-3-030-89716-1_10}) & \\ 
        \midrule
        $\textsc{Reach}_\mathcal{F}$  & \NP-complete (Thm.~\ref{sec:lpproblems;th:unary}) & \NP-complete (Thm.~\ref{sec:bvproblems;th:unary})\\
        \midrule
        $\textsc{Reach}(\fixarith, \_)$ & \NP-complete (Thm.~\ref{sec:lpproblems;th:binary}) & \PSPACE-complete  \\
        & & (Thm.~\ref{thm:bvproblems_pspace}, \cite{Henzinger_Lechner_Žikelić_2021})\\
        $\textsc{Reach}(\floatarith_E, \_)$ & \NP-complete (Thm.~\ref{sec:lpproblems;th:binary}) & \PSPACE-complete \\
        & & (Thm.~\ref{thm:bvproblems_pspace}, \cite{Henzinger_Lechner_Žikelić_2021})\\
        $\textsc{Reach}(\floatarith, \_)$ & \NP-complete (Thm.~\ref{sec:lpproblems;th:binary}) & \PSPACE-hard, $\in \NEXPTIME$\\
        & & (Prop.~\ref{thm:bvproblems_nexp}, \cite{Henzinger_Lechner_Žikelić_2021})\\
        \bottomrule
        \end{tabular}
        }
    \caption{Overview of the results established in this paper.}
    \label{fig:results}
\end{table*}

\begin{lemma}
    There exists a succinct NFA $M_{N,\floatarith_{k, E}}$ with description size in $\text{poly}(|N|+\log(k))$ accepting $R_{N,\floatarith{_k, E}}$.
\end{lemma}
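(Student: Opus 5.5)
The plan is to reduce the floating-point case to the fixed-point construction of the previous lemma. Because the exponent width $E$ is a constant, every floating-point value of $\floatarith_{k,E}$ is an integer multiple of $2^{-(2^{E}+k)}$ and is bounded in absolute value by $2^{2^E+1}$. Hence it is exactly representable in a fixed-point format with $O(k + 2^E) = O(k)$ bits. The bit-width of this fixed-point format is still given by a number of size $O(\log k)$. The only difficulty observed before the statement is the alignment shift by $\delta=|e_1-e_2|$. With constant $E$, $\delta$ is bounded by the constant $2^E$, so an automaton needs to buffer only $O(2^E)$ bits, which is constant.

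\textbf{Construction of the automaton.} First, at step $0$, $M_{N,\floatarith_{k,E}}$ reads the sign and the $E$ exponent bits of every input and output coordinate. It stores them in the state; this takes $O((m+n)E)$ bits. It also nondeterministically guesses the following data:
\begin{itemize}
\item the exponent of every intermediate quantity (every partial sum, every product, and every neuron output), which takes $O(|\theta|\cdot E)$ bits;
\item the activation pattern $\bs{w}$;
\item the rounding carries $\bs{c}_{\text{init}}$, as in the fixed-point lemma.
\end{itemize}
Second, it reads the mantissa bits least significant first, as in the fixed-point case. Rational weights are again accessed through $\text{GetBit}$ and never stored in the state. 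For each pairwise addition, the state keeps a sliding window of the last $2^E$ bits of the operand with the smaller exponent. This lets the automaton realise the right shift by $\delta \le 2^E$ on the fly. Carries are bounded as before, independently of $k$. Multiplications by rational weights are handled as a bit-serial multiply-accumulate exactly as in the fixed-point lemma, with the guessed exponents only offsetting bit indices. Third, the bits shifted below the mantissa are aggregated in the fractional accumulator $\bs{r}$. This yields the guard, round and sticky information needed to confirm $\bs{c}_{\text{init}}$ under the given rounding mode $r$.

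\textbf{Final checks.} At $t=k$, \textsc{Final} verifies the following:
\begin{enumerate}
\item each guessed result exponent is consistent with normalisation, i.e.\ the leading bit lands at the position the guessed exponent predicts, checked by the carry out of the top;
\item overflow and underflow are handled according to $r$;
\item the signs are consistent with $\bs{w}$.
\end{enumerate}
\textsc{Trans} checks one bit step for all neurons in parallel. A state consists of $t\le k+O(1)$, constant-size windows per operation, carries bounded polynomially in $|N|$, the accumulators, and the guessed exponents. Its description size is therefore $\text{poly}(|N|+\log k)$, and both procedures run in polynomial time.

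\textbf{Main obstacle.} The hardest step is the interaction between normalisation and the least-significant-bit-first reading. After an addition, the result exponent may change, through cancellation or a carry, and the output mantissa bits must then be compared against a shifted version of the computed sum. I would handle this by guessing the result exponents up front, as above. Any normalisation shift is then at most about $k+2^E$ positions. Cancellation by more than $2^E$ positions would be a problem, but it cannot occur: the normalised representation implies that a shift larger than the constant exponent range leaves the representable range, so it is covered by the underflow and overflow handling. The remaining shift is therefore again bounded by $O(2^E)$ and can be absorbed in the constant-size window.
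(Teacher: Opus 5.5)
Your construction is essentially the paper's: exponents stored or guessed up front (the paper's $\bs{E}$), a constant-width alignment buffer (the paper's $\bs{S}$), normalisation shifts bounded via the exponent range, and rounding through $\bs{r}$ and a guessed $\bs{c}_{\text{init}}$. Your buffer width $2^E$ is in fact the right one, since alignment shifts reach $2^E-1$, whereas the paper writes $E$.

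The step that fails as written is the one you delegate to the fixed-point lemma: that multiplication by a weight is a bit-serial multiply-accumulate over $\text{GetBit}$-extracted weight bits, with carries bounded independently of $k$. The paper's sketch makes the same claim without support. In $N|_{\arith}$ the weight is the quantised $\tilde w$, whose mantissa is a $(k+1)$-bit integer $M_w$; for $w=4/3$ it is $1.0101\cdots$. If you multiply the streamed mantissa $M_x$ by $M_w$ least-significant bit first, the carry after $t$ steps is $\lfloor M_w\,(M_x \bmod 2^t)/2^t\rfloor$. For $t=k$ this takes $2^k$ distinct values. Having the bits of $M_w$ on demand does not help, because the information sits in this carry. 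So your states would need $\Theta(k)$ bits. That is exponential in the input size, because $k$ is given in binary, so the $\text{poly}(|N|+\log k)$ bound breaks at the core operation of the network.

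The missing idea is to use that $\tilde w$ is a rounded \emph{small} rational, rather than streaming its bits. Take $w=p/q$ with exponent $e_w$ (in the paper's encoding $0\le e_w<|N|$) and set $K=k-e_w$. Then $M_w=\operatorname{round}(2^K|p|/q)$, i.e.\ $qM_w=2^K|p|+\sigma$ with $|\sigma|\le q/2$. For the normalisation shift $\nu\in\{0,1\}$ of the product,
\[
\frac{M_wM_x}{2^{k+\nu}}=\frac{|p|\,M_x}{q\,2^{e_w+\nu}}+\frac{\sigma M_x}{q\,2^{k+\nu}},
\]
and the last summand has absolute value at most $1$ because $M_x<2^{k+1}$. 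Hence the rounded mantissa $R$ satisfies $q2^{e_w+\nu}R=|p|M_x+\rho$ for an integer $\rho$ with $|\rho|\le 2q\,2^{e_w}$, which can be guessed at $t=0$. Round-to-nearest then becomes $|\sigma M_x-\rho\,2^{K}|\le q\,2^{k+\nu-1}$. Both conditions are linear in the streamed $M_x$ and $R$, with coefficients $p,q,\sigma$, a relative shift of at most $|N|+1$ positions, and constants of the form small integer times a power of two. They can therefore be checked least-significant bit first with carries of $O(|N|)$ bits, independent of $k$. With this in place of the multiplication step, the remaining parts of your construction go through as you sketch them.
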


\begin{proof}
A state $q_t$ in $M_{N,\floatarith_{k, E}}$ is a tuple $(t, \bs{E}, \bs{S}, \bs{c}, \bs{r}, \bs{c}_{\text{init}}, \bs{w})$, where
$(t, \bs{c}, \bs{r}, \bs{c}_{\text{init}}, \bs{w})$ is as in the fixed-point case. Additionally, $\bs{E}\in \{0, \cdots, 2^E-1\}^{|\theta|}$
stores the exponent values for every neuron and $\bs{S}\in (\{0,1\}^E)^{|\theta|}$ stores the last $E$ mantissa
bits of every neuron's activation value. Since $E$ is fixed, $\bs{E}$ and $\bs{S}$ only take polynomial space,
and each state needs space $\mathcal{O}(\log k + |N|)$, which is polynomial in the input size.

The construction relies on $E$ being constant in three ways. \emph{Alignment}: since all exponents lie in $\{0,\ldots,2^E-1\}$, the maximum alignment shift is $2^E-1$, a constant; the mantissa bits that fall into this window are exactly those stored in $\bs{S}_j$, so aligned operands can be recovered from the state alone. \emph{Normalisation}: after addition or multiplication of normalised values the leading bit shifts by a constant amount bounded by $E$, so normalisation is a constant-size state update. \emph{Rounding}: the sub-LSB contributions are tracked by $\bs{r}$ as in the fixed-point case; the rounding decision depends only on $\bs{r}$ and the LSB, so the fixed-point rounding argument carries over directly. The $t$-th bit of every rational weight can be extracted in polynomial time (see Lemma~16 in Appendix), so $\textsc{Trans}$ and $\textsc{Final}$ both run in polynomial time.
\end{proof}

\begin{lemma}
Let $\varphi$ a BV formula over variables $x_1, \ldots, x_k$ and bit-width $b$ (given in binary). There 
exists a succinct NFA $M_\varphi$ with description size in $\text{poly}(|\varphi|+\log(b))$, such that $M_\varphi$ accepts the set of models of $\varphi$.
\end{lemma}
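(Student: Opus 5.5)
The plan is to build $M_\varphi$ so that it reads all $k$ variables bitwise in parallel. It uses the same least-significant-bit-first convention as the automata for $R_{N,\fixarith_{b,f}}$, so the two can later be intersected. The alphabet is $\{0,1\}^k$: at step $t$ the automaton reads the $t$-th bit of every $\vartheta(x_i)$. The key observation is that all term operations $\BitNeg$, $\BitAnd$, $\BitOr$ are purely bitwise. The $t$-th bit of $\llbracket s\rrbracket^\vartheta$ therefore depends only on the $t$-th bits of the variables and the $t$-th bit of $\text{enc}_b(c)$ for each constant $c$ in $s$. No carries or buffers are needed, unlike in the arithmetic automata.

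A state is a pair $(t,\bs{e})$. Here $t\in\{0,\dots,b\}$ is a bit counter stored in binary, using $\mathcal{O}(\log b)$ bits. The vector $\bs{e}\in\{0,1\}^{m}$ has one flag per atomic subformula $(s_1=s_2)$ of $\varphi$, with $m\le|\varphi|$. The flag records whether $\llbracket s_1\rrbracket^\vartheta$ and $\llbracket s_2\rrbracket^\vartheta$ have agreed on all bits read so far. The initial state is $(0,\bs{1})$.

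\textsc{Trans}$((t,\bs{e}),\sigma,(t',\bs{e}'))$ checks three things:
\begin{itemize}
\item that $t<b$ and $t'=t+1$;
\item that, for each atom, the $t$-th bits of both sides have been computed by evaluating the term syntax trees bottom-up on the letter $\sigma$, where for a constant $c$ (given in binary) its $t$-th bit is obtained in time polynomial in $|c|+\log b$;
\item that $\bs{e}'_j=\bs{e}_j\wedge[\text{bits agree}]$ for every atom $j$.
\end{itemize}
This takes time $\mathcal{O}(|\varphi|\cdot\text{poly}(\log b))$. \textsc{Final}$(t,\bs{e})$ requires $t=b$. It then evaluates the Boolean structure of $\varphi$, built from $\neg$ and $\wedge$, with each atom $(s_1=s_2)$ set to $\bs{e}_j$. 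This is again polynomial time, in the spirit of Prop.~\ref{prop:bv_nlogspace}.

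Correctness is shown by induction on $t$: after reading $t$ letters, $\bs{e}_j=1$ iff the two sides of atom $j$ agree on their lowest $t$ bits. At $t=b$ this is exactly equality of the bit-vectors, so the final test coincides with $\vartheta\models\varphi$. Negation causes no problem, because the automaton is deterministic apart from the input letter and the Boolean combination is applied only at acceptance; an NFA would otherwise struggle with complementation. Words of length different from $b$ are rejected, since no transition leaves $t=b$ and \textsc{Final} requires $t=b$. The state description has size $\mathcal{O}(\log b+|\varphi|)$, as claimed.

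The only delicate point is matching encodings with the FNN automata. Fixed-point numbers are read in two's complement, least-significant bit first, and floating-point numbers as sign, exponent and mantissa. The BV semantics via $\text{enc}_\ell$ must use the same bit order per variable, and the interleaving of variables must agree with the letter structure of $M_{N,\arith}$. Since BV operations are bitwise, any fixed, consistent permutation of bit positions works. I expect this bookkeeping, rather than any computational step, to be the main thing to state carefully. For floating point, the $(E+1)$ sign and exponent bits would be read first, in a constant-length prefix phase, before the mantissa.
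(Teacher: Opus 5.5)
Your proof is correct. Its core is the same as the paper's: a binary bit counter plus equality flags, updated by evaluating the term trees bitwise on each letter, with acceptance only at $t=b$.

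The routes differ in how the Boolean structure of $\varphi$ is handled.
\begin{itemize}
\item The paper first puts $\varphi$ into negation normal form. It then builds a separate automaton with states $(j,f_{eq})$ for each literal $(t_1=t_2)$ or $\neg(t_1=t_2)$, absorbing the negation into that literal's acceptance condition. Finally it assembles $M_\varphi$ using closure of succinct NFA under intersection and union.
\item You instead run all atoms in lockstep in one state with a flag vector $\bs{e}$ and a single shared counter. You then evaluate the entire $\neg/\wedge$-structure on $\bs{e}$ inside \textsc{Final}.
\end{itemize}

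Your route avoids the NNF translation and the closure lemmas. It also makes explicit why negation and disjunction are harmless: the automaton has exactly one run on every word, so applying an arbitrary Boolean combination at acceptance is sound. The paper's product-style union (accept if at least one component is final) relies on the same fact implicitly, because it is only correct when the literal automata are complete.

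The paper's route is more modular. It reuses the closure properties that the main theorem later needs anyway to intersect with $M_{N,\arith}$.

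Your count of the counter as $\mathcal{O}(\log b)$ bits is the right one. The paper's text states $\mathcal{O}(b)$, which would not be polynomial in $\log b$.
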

\begin{proof}
Assume $\varphi$ to be in negation normal form, i.e.\ 
it consists of disjunction and conjunctions of atomic formulas of the form $(t_1=t_2)$ or $\neg(t_1 = t_2)$. 
As succinct NFA are closed under union and intersection, it is sufficient to construct succinct NFA for each 
atomic subformula. 

Let $\psi$ be such a term. A state is a tuple $q = (j, f_{eq})$, where $j \in \{0, \dots, b\}$ is a bit-counter 
and $f_{eq} \in \{0, 1\}$ is a flag tracking whether the evaluated terms equal up to position $j$. The 
description size is in $\mathcal{O}(b)$, which is polynomial. The start state is $(0, 1)$. Let $\sigma \in \{0,1\}^{|V|}$ be the input symbol representing the $j$-th bit-slice of the variables. We define a recursive evaluation function $\text{Eval}(t, j, \sigma)$ for a term $t$: If $t = x \in V$ then $\text{Eval}(x, j, \sigma) = \sigma(x)$, if $t = c$ then $\text{Eval}(c, j, \sigma) = c_j$ (the $j$-th bit of $c$), if $t = (t_a \odot t_b)$ for $\odot \in \{\&, |, \oplus\}$ then $\text{Eval}(t, j, \sigma) = \text{Eval}(t_a, j, \sigma) \odot \text{Eval}(t_b, j, \sigma)$ and if $t = {\sim}t_a$ then $\text{Eval}(t, j, \sigma) = 1 - \text{Eval}(t_a, j, \sigma)$.

Since the depth of the term tree is bounded by $|\psi|$, $\text{Eval}$ runs in polynomial time.

Algorithm $\textsc{Trans}((j, f_{eq}), \sigma, (j', f'_{eq}))$ returns true iff $j' = j + 1$ and $f'_{eq} = 1$ if $ f_{eq} = 1 \text{ and } \text{Eval}(t_1, j, \sigma) = \text{Eval}(t_2, j, \sigma)$ or $f'_{eq} =0$ otherwise.
This transition logic ensures that $f_{eq}$ starts at 1 and acts as a ``sink'': once a mismatch is detected at 
position $j$, $f_{eq}$ becomes and remains 0 for all subsequent steps.

Algorithm $\textsc{Final}((j, f_{eq}))$ returns true iff $j = b$ and the flag matches the formula type: if $\psi$ 
is $(t_1 = t_2)$ then accept iff $f_{eq} = 1$ (all bits matched). If $\psi$ is $\neg(t_1 = t_2)$ then accept 
iff $f_{eq} = 0$ (at least one bit mismatch occurred).
\end{proof}

We are now ready to prove Theorem~\ref{thm:bvproblems_pspace}.

\begin{proof}[Proof (of Thm.~\ref{thm:bvproblems_pspace}).]
    We aim to check whether there is an $\bs{x}$ such that $(\bs{x}, N|_{\arith}) \models \varphi_1 \land \varphi_2$. This corresponds to the non-emptiness of the intersection of the FNN automaton and the specification automaton for
$L(M) = L(M_{N, \arith}) \cap L(M_{\varphi_1 \land \varphi_2})$.
Succinct NFA are closed under intersection. We can therefore construct the succinct product NFA and check for 
emptiness using the \PSPACE algorithm from Proposition~\ref{prop:succinctNFAPSPACE}. \PSPACE-hardness follows from Theorem~\ref{thm:bvproblems:pspace_hardness}.
\end{proof}

For general floating-point arithmetic, we can derive an upper bound from Theorem~\ref{sec:bvproblems;th:unary}.

\begin{proposition}
    \label{thm:bvproblems_nexp}
    $\textsc{Reach}(\mathcal{F}, \text{BV})$ is in \NEXPTIME for general floating-point arithmetic.
\end{proposition}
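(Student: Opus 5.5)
The plan is to reduce to the certificate-based argument of Theorem~\ref{sec:bvproblems;th:unary} and account for the blow-up caused by the binary encoding of the arithmetic's parameters. Let an instance $(\arith_{p,e}, N, \varphi_1, \varphi_2)$ be given, where $p$ and $e$ are written in binary. The instance size is $s = \size{}{N} + \size{}{\varphi_1} + \size{}{\varphi_2} + \size{}{p} + \size{}{e}$. Every number in $\floatarith_{p,e}$ is then represented by $1 + e + p \le 2^{O(s)}$ bits. The nondeterministic procedure runs as follows.
\begin{itemize}
\item First, quantise the rational parameters of $N$ into $\floatarith_{p,e}$, obtaining $N|_\arith$.
\item Second, guess an input $\bs{x} \in \arith^d$.
\item Third, check $\bs{x} \models \varphi_1$.
\item Fourth, evaluate $N|_\arith(\bs{x})$.
\item Fifth, check $N|_\arith(\bs{x}) \models \varphi_2$.
\end{itemize}
Correctness is immediate from the semantics of $\textsc{Reach}(\mathcal{F}, \text{BV})$, since a witness exists if and only if some guess passes. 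The only task is to bound the running time by $2^{\text{poly}(s)}$.

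The certificate $\bs{x}$ has length $d \cdot (1+e+p)$. Here $d \le \size{}{N}$, so the length is $2^{O(s)}$, and guessing it takes exponential time. For the quantisation step, each rational weight $w = p'/q'$ must be rounded to a float with an exponent from a range of size about $2^{e}$ and a $p$-bit significand. Computing the exponent amounts to comparing $|w|$ with powers of two, and computing the significand bits is a long division to $p$ bits. I will invoke the bit-extraction routine from the appendix (Lemma~16), or simply perform exact rational arithmetic. Either way, each quantised weight costs time polynomial in $\size{}{w}$ and $2^{O(s)}$.

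The evaluation step is the step I expect to need the most care. Each arithmetic operation $x \circ y$ in $\floatarith_{p,e}$ is defined as $r(x \circ y)$ for the exact rational result. Both operands are floats whose exponents lie in a range of magnitude $2^{e-1}$, so writing them as integers times a common power of two produces integers with at most $O(2^{e} + p)$ bits, which is $2^{O(s)}$. The exact sum or product of two such numbers therefore has $2^{O(s)}$ bits and can be computed in time polynomial in that length. The same holds for the subsequent rounding (locate the leading bit, take $p$ bits, apply the rounding mode) and for the overflow check. Because every intermediate value is re-rounded into $\floatarith_{p,e}$ before the next operation, sizes never compound across the $O(\size{}{N})$ operations of a forward pass, and ReLU is a trivial comparison. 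The whole evaluation thus takes time $\size{}{N} \cdot 2^{O(s)}$.

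Finally, each BV formula is checked by Proposition~\ref{prop:bv_nlogspace} in time $O(\ell \cdot |\varphi|)$ with $\ell = 1+e+p \le 2^{O(s)}$. Summing all steps gives a nondeterministic algorithm running in time $2^{\text{poly}(s)}$, which establishes membership in \NEXPTIME. The main obstacle is the evaluation step: I must argue carefully that alignment shifts, which are up to $2^{e}$ positions and hence doubly exponential in $\size{}{e}$ as a value but only singly exponential in bit length, never push an intermediate representation beyond $2^{\text{poly}(s)}$ bits. The re-rounding after every operation is what secures this.
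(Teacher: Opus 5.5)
Your strategy is the same as the paper's. You guess an input of length $d(1+e+p)\le 2^{O(s)}$ and verify it deterministically by quantising, evaluating $N|_\arith$ and model-checking $\varphi_1,\varphi_2$. The paper simply asserts, by analogy with Theorem~\ref{sec:bvproblems;th:unary}, that this check is polynomial in the witness length, and says nothing about the cost of the arithmetic.

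The gap is in your justification of the evaluation step, which is exactly where you locate the main obstacle. Since $e$ is given in binary, $e$ itself can be as large as $2^{s}$. The exponent range therefore has size about $2^{e}$, which can be $2^{2^{s}}$. Writing two floats as integers over a common power of two, as you propose, yields integers with $\Theta(2^{e}+p)$ bits in the worst case. This already happens when a neuron with unit weights sums two guessed inputs with extreme exponents $2^{e-1}-1$ and $-(2^{e-1}-2)$. That bit length is doubly exponential in $s$, not $2^{O(s)}$ as you claim. Your closing remark conflates the bit length of the shift amount $\delta=|E_1-E_2|$ (about $e$ bits, singly exponential) with the bit length of the shifted significand (about $\delta$ bits). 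Re-rounding after each operation only prevents growth across operations. It does not help within a single addition whose exact result you materialise before rounding.

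The missing idea is the standard guard/sticky-bit argument. Suppose $\delta\ge p+3$. Then the smaller operand $y$ satisfies $0<|y|<2^{E_2+1}\le 2^{E_1-p-2}$, which is less than half the spacing of $\floatarith_{p,e}$ on either side of the larger operand $x$. Hence $r(x+y)$ depends only on $x$, the sign of $y$ and the rounding mode, and $r(x+y)=r(x+y')$ for $y'=\mathrm{sign}(y)\cdot 2^{E_1-p-3}$. After this clamping, every addition works on $O(p)$-bit significands and $e$-bit exponents. Multiplication and division need no alignment at all. So each operation costs time $\mathrm{poly}(p+e)\le 2^{O(s)}$. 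With this replacement, the rest of your accounting goes through: rounding the rational weights, $O(\size{}{N})$ operations per forward pass, and BV model checking in time $O(\ell\cdot|\varphi|)$. This gives the \NEXPTIME bound.
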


\begin{proof}
    This is a direct consequence of Theorem~\ref{sec:bvproblems;th:unary}. Due to the binary encoding of the bit-width $b$ we can guess a witness of exponential length. Analogously to Theorem~\ref{sec:bvproblems;th:unary}, we can check the witness in polynomial time.
\end{proof}

These results complement the recently established results from \cite{Henzinger_Lechner_Žikelić_2021} with upper bounds. 

%% file: sections/outlook.tex

We established the complexity landscape of verifying quantised FNN (see Table \ref{fig:results} for an overview), showing that the complexity in most cases 
matches the complexity of verification in full rational arithmetic. Current NN verification tools operate under idealised real-valued 
assumptions, while actual implementations use finite-width arithmetic. As shown in \cite{DBLP:conf/sas/JiaR21}, this mismatch
can invalidate formal guarantees. Our results show that, for the standard case of linear specifications,
verification of quantised networks remains \NP. Hence,
developing sound verification tools for quantised FNNs is not only necessary for
correctness, but feasible without any asymptotic overhead compared to existing real-valued solvers.
Moreover, we complemented the previous \PSPACE lower bound for verifying dynamically 
quantised FNN with bit-vector specifications by giving matching 
upper bounds in the cases of fixed-point and floating-point arithmetic with bounded exponent width. We 
identified that in case of general floating-point arithmetic it seems like the problem could be even harder 
than \PSPACE, due to the fact that the buffer size needed for alignment prevents the \PSPACE upper bound. It 
remains open whether this need for alignment can be exploited to show a matching \NEXPTIME lower bound for 
this case. Additionally, investigating the practical implementation of our succinct automata approach could 
offer a promising direction for creating solvers that accurately verify networks in quantised settings.

%% file: sections/appendix.tex
\begin{lemma}
    \label{lem:app:bit_extract_fixed_point}
    Let $w=\frac{p}{q}$ be a rational number given by binary encoded integers $p,q$ and $t$ be a bit-index given in binary. There exists a polynomial time algorithm $\text{GetBit}_{\fixarith}(w, t)$ that computes the $t$-th bit of binary two's complement representation of $w$ in fixed-point format.
\end{lemma}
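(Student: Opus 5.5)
The plan is to reduce bit extraction to a single modular exponentiation followed by a few operations on numbers of size polynomial in $\size{}{p}+\size{}{q}+\size{}{t}$. Throughout, the fixed-point format is $\fixarith_{b,f}$, so bit positions are indexed $t\in\{1,\dots,f\}$ for the fractional bits $b_t$ (weight $2^{-t}$) and $t\in\{0,\dots,b-1\}$ for the integer bits $a_t$. I assume $w$ lies in the representable range; otherwise I apply the overflow mode $o$ first, which for saturation or wrap-around only needs comparisons of $w$ with $\pm 2^{b-1}$, and these can be done with numbers of size $\size{}{b}$. Likewise I fix the rounding mode $r$ to be truncation toward $-\infty$, so that the bits of $w$ are the bits of $\lfloor w\cdot 2^{f}\rfloor$. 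Round-to-nearest reduces to this case by adding $2^{-f-1}$, which changes only one additional bit.

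First, I handle the sign using two's complement. Let $M=\lfloor 2^f w\rfloor$ and note that the $t$-th bit of the representation equals the corresponding bit of $M \bmod 2^{b+f}$. Because $M \bmod 2^{b+f}=M+2^{b+f}$ when $M<0$, it suffices to extract bits of a nonnegative rational of the form $u=p'/q$ with $p'\ge 0$.

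Second, I extract the bits of such a $u$. For an integer bit $a_t$ with $t\ge 0$, the bit is $\lfloor u/2^t\rfloor \bmod 2$. Since $p,q$ have polynomial size and $t$ can be exponential only when $t$ exceeds $\size{}{p}$, I split into two cases. If $t>\size{}{p}$, the answer is $0$, or it is the sign bit in the negative case after the padding above. Otherwise the division is performed directly on polynomial-size numbers. For a fractional bit $b_t$, the bit is $\lfloor 2^t u\rfloor \bmod 2$. Writing $p'=sq+\rho$ with $0\le\rho<q$, I get
\[
b_t=\Big\lfloor \frac{2\,(2^{t-1}\rho \bmod q)}{q}\Big\rfloor ,
\]
and $2^{t-1}\bmod q$ is computed by repeated squaring in $O(\size{}{t})$ multiplications modulo $q$. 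This is polynomial even though $t$ is given in binary. The same trick computes the borrow in the negative case, which is nonzero exactly when some lower bit is nonzero. For the fractional bits of $M$ after flooring at position $f$, this amounts to checking whether $2^{f}\rho\not\equiv 0 \pmod q$, again a single modular exponentiation.

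The main obstacle is the interaction between the exponentially large indices $t$ and $f$ and the two's complement negation. A naive computation of $2^{f}w$ or of $2^{b+f}$ is exponential in size. I plan to avoid forming these numbers altogether by expressing every needed quantity, namely the floor at position $f$, the borrow/carry, and the bit itself, through residues $2^{j}\bmod q$ and comparisons against $\size{}{p}$. I expect this to go through cleanly, since every branch touches only numbers of size $O(\size{}{p}+\size{}{q})$ plus $O(\size{}{t}+\size{}{b}+\size{}{f})$ modular multiplications.
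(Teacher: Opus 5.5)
Your proposal is correct and takes essentially the paper's route. Fractional bits come from the residue $2^{t-1}p \bmod q$, computed by repeated squaring, followed by the test $2r\ge q$. Integer bits come from direct division on polynomial-size numbers, with answer $0$ (or $1$ for negative $w$) once $t$ exceeds the bit-length of $p$.

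Your separate two's-complement and borrow step is not needed. If you take the nonnegative residue, the same formula already returns the floor-based two's-complement bits when $p<0$, because $\lfloor \lfloor 2^f w\rfloor/2^{f-t}\rfloor=\lfloor 2^t w\rfloor$.

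The round-to-nearest aside, which the paper does not treat, is wrong as stated. Adding $2^{-f-1}$ can set off a carry chain, so it changes more than ``one additional bit'', and it also makes the denominator $q2^{f+1}$, which is exponentially large. Instead, compute $\lfloor 2^t w+2^{t-f-1}\rfloor \bmod 2$ directly. The extra term only matters when $f-t$ is at most about $\log_2 q$, so this stays polynomial.
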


\begin{proof}
    The integer part of $w$ is determined by the integer division $A = \lfloor \frac{p}{q} \rfloor$. Since $p$ and $q$ are part of the input, computing $A$ takes polynomial time. The bit at position $2^t$ is simply the $t$-th bit of the integer $A$. If $t$ exceeds the bit-width of $A$, the bit is $0$ (or $1$ for negative numbers).

The $i$-th fractional bit of $\frac{p}{q}$ corresponds to:
\[ b_i = \left\lfloor \frac{p \cdot 2^i}{q} \right\rfloor \pmod 2 \]
Calculating $2^i$ explicitly is impossible since $i$ can be exponential in the input size (up to $2^{|k|}$). However, we only need the result modulo 2. The standard algorithm for extracting the $i$-th fractional bit works as follows:
1. Compute the remainder $r_i = (p \cdot 2^{i-1}) \pmod q$.
   This requires computing $2^{i-1} \pmod q$. Since $i$ is given in binary, we can use modular exponentiation. The time complexity is polynomial in the number of bits of the exponent ($i$) and the modulus ($q$).
2. Once $r_i$ is found, the $i$-th bit is determined by comparing $2 \cdot r_i$ with $q$:
   \[ b_i = \begin{cases} 1 & \text{if } 2 \cdot r_i \ge q \\ 0 & \text{otherwise} \end{cases} \]
\end{proof}

\begin{lemma}
    \label{lem:app:bit_extract_floating_point}
    Let $w=\frac{p}{q}$ be a rational number given by binary encoded integers $p,q$ and $t$ be a bit-index given in binary. There exists a polynomial time algorithm $\text{GetBit}_{\floatarith_{E}}(w, t)$ that computes the $t$-th bit of the binary representation of $w$ in floating-point format with maximum exponent $E$.
\end{lemma}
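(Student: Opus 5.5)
The plan is to reduce the floating-point case to the fixed-point bit extraction of Lemma~\ref{lem:app:bit_extract_fixed_point}. The reduction works by first determining the exponent of $w$, which ranges over a constant-size set because the maximum exponent $E$ is fixed. We assume $w \neq 0$; otherwise every bit is $0$ and we are done. If $w<0$, we record the sign bit $s=1$ and replace $p$ by $|p|$. The representation $w = 2^{e}\cdot(1+m)$ with $m\in[0,1)$ then requires $e = \lfloor \log_2 (p/q)\rfloor$. This can be computed in polynomial time by comparing $p$ with $q\cdot 2^{j}$ (or $p\cdot 2^{j}$ with $q$ for negative $j$). The candidates $j$ come from the constant-size exponent range $\{-(2^{E-1}-2),\dots,2^{E-1}-1\}$, so only constantly many comparisons of polynomial-size integers are needed. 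If $p/q$ lies outside the representable range, we instead return the bits of the saturated or rounded value prescribed by the rounding mode; that value has exponent at the boundary of the range and is handled the same way.

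Once $e$ is known, we split on the position $t$. If $t$ indexes the sign bit or one of the constantly many exponent bits, we output the corresponding bit of $s$ or of the biased encoding of $e$, which has constant size. If $t$ indexes the $i$-th mantissa bit, then $m_i$ is the $i$-th fractional bit of $w' = \frac{p}{q\cdot 2^{e}}$ (or $\frac{p\cdot 2^{-e}}{q}$ when $e<0$). Since $|e|\le 2^{E-1}$ is constant, $w'$ is again a rational $p'/q'$ whose numerator and denominator have bit length only an additive constant larger than those of $p$ and $q$. We therefore call $\text{GetBit}_{\fixarith}(w',i)$ from Lemma~\ref{lem:app:bit_extract_fixed_point}. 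That procedure computes $2^{i-1}\bmod q'$ by modular exponentiation, in time polynomial in $\log i$ and $\log q'$, and compares $2r_i$ with $q'$. Because $t$ is given in binary, $\log i$ is polynomial in the input size, so the whole call is polynomial.

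The main obstacle is rounding. The mantissa bits of the \emph{rounded} value $r(w)$ can differ from the raw expansion of $w'$, because a round-up may carry through an arbitrarily long run of trailing ones, in the worst case renormalising and incrementing $e$. I would handle this with modular arithmetic. Let $k$ be the mantissa width. The round-to-nearest decision is given by bit $k+1$ of $w'$ together with whether the remainder is nonzero (for ties), and both are computable by the same modular-exponentiation step. If the decision is to round up, bit $i$ of the rounded mantissa equals bit $i$ of $\lfloor w'\cdot 2^{k}\rfloor + 1$. That bit is flipped relative to the raw bit exactly when bits $i+1,\dots,k$ of $w'$ are all ones. This holds precisely when $\lfloor w' 2^{k}\rfloor \equiv -1 \pmod{2^{k-i}}$, equivalently when the fractional part of $w'2^{i}$ is at least $1-2^{-(k-i)}$. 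The latter can be tested by comparing $(p'2^{i}\bmod q')$ with $q'(1-2^{-(k-i)})$, again using only modular exponentiation with exponents given in binary. If the carry reaches the leading bit, the mantissa becomes $0$ and $e$ increases by one, which is a constant-size correction. Other rounding modes are handled in the same way, using the corresponding decision predicate.
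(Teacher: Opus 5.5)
Your proof is correct, and its skeleton is the same as the paper's. You determine the exponent ($V$ in the paper) by comparing $|p|$ with $q\cdot 2^{j}$ and return the sign and exponent bits directly. You then obtain mantissa bit $m_i$ as the bit of $|w|$ at position $e-i$ (your $i$-th fractional bit of $w'=|w|/2^{e}$), using the modular-exponentiation procedure of Lemma~\ref{lem:app:bit_extract_fixed_point}.

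Where you go further is rounding. The paper simply reads off the raw binary expansion of $|w|$, i.e.\ it implicitly quantises by truncation, which keeps its proof to a few lines. Your carry-propagation test instead gives the bits of the actually rounded value $r(w)$, including the renormalisation case. So your version is the one that matches a quantised weight when the arithmetic does not truncate.

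One step of that extension needs a sentence more than you give it. Since $k-i$ may be exponential in the input size, the threshold $q'(1-2^{-(k-i)})$ cannot be written down, and modular exponentiation does not supply this comparison. Instead, with $R=p'2^{i}\bmod q'$, test $2^{k-i}(q'-R)\le q'$. Because $q'-R\ge 1$, this can only hold if $2^{k-i}\le q'$. So either $k-i>\log_2 q'$ and the answer is negative, or $2^{k-i}$ has polynomially many bits and can be computed explicitly. Equivalently, a fraction with denominator $q'$ has no run of more than $\log_2 q'$ consecutive ones. The same observation covers the $i=0$ test that decides whether the carry reaches the leading bit. Also, the bit of $\lfloor w'2^{k}\rfloor+1$ you mean is the one of weight $2^{k-i}$, not bit $i$.
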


\begin{proof}
First, we determine the unique integer exponent $V$ such that $2^V \le |\frac{p}{q}| < 2^{V+1}$.
This is equivalent to finding $V$ satisfying:
\[ |p| \cdot 2^{-V-1} < q \le |p| \cdot 2^{-V} \]
Since $p$ and $q$ are given in binary, $V$ can be computed in polynomial time.

If $V \ge 2^E$, the number is too large. We return the bit corresponding to infinity (standard convention: exponent all 1s, mantissa 0). 

We return the bit based on which section $t$ falls into: For the sign bit return $1$ if $w$ is negative, else $0$. If $t$ is an exponent bit we need the bit at index $t-1$ of the integer $V$. Since we computed $V$ in Step 1, we simply return the $(t-1)$-th bit of the binary representation of $V$. If an overflow occurred, return 1 and if an underflow occurred, return 0. Let $j = t - E$ be the index of the mantissa bit $m_j$.
The bit $m_j$ corresponds to the value $2^{-j}$ in the normalized significand $1.m_1 \dots m_k$.
Algebraically, the $j$-th mantissa bit is the bit at position $V - j$ of the original number $|\frac{p}{q}|$.

Therefore, we can use the algorithm for extracting bits for fixed-point numbers from Lemma~\ref{lem:app:bit_extract_fixed_point}: $\text{GetBit}_{\fixarith}(|w|,V-j)$.

All steps—calculating $V$, comparing against $2^E$, and extracting the bit at $V-j$—rely on basic arithmetic and modular exponentiation on binary strings of polynomial length. Thus, the algorithm runs in polynomial time.
\end{proof}